\documentclass[runningheads, envcountsect, 10pt]{llncs}

\usepackage{amssymb}
\usepackage{url}

\begin{document}
	
	\title{Recognizing the Tractability in Big Data Computing}
	
	\titlerunning{Recognizing the Tractability in Big Data Computing}
	
	%gao, li, miao, 
	\author{Xiangyu Gao\inst{1} \and
		Jianzhong Li\inst{2} \and
		Dongjing Miao\inst{3} \and
		Xianmin Liu\inst{4}
	}
	
	\authorrunning{X. Gao, J. Li, et al.}
	
	\institute{
		\email{gaoxy@hit.edu.cn}\and
		\email{lijzh@hit.edu.cn}\and
		\email{miaodongjing@hit.edu.cn}\and
		\email{liuxianmin@hit.edu.cn}\\
		Harbin Institute of Technology, Harbin, Heilongjiang 150001, China
	}
	
	\maketitle
	
	\begin{abstract}
		Due to the limitation on computational power of existing computers, the polynomial time does not works for identifying the tracta-ble problems in big data computing. 
		This paper adopts the sublinear time as the new tractable standard to recognize the tractability in big data computing, and the random-access Turing machine is used as the computational model to characterize the problems that are tractable on big data. 
		First, two pure-tractable classes are first proposed.
		One is the class $\mathrm{PL}$ consisting of the problems that can be solved in polylogarithmic time by a RATM.
		The another one is the class $\mathrm{ST}$ including all the problems that can be solved in sublinear time by a RATM.
		The structure of the two pure-tractable classes is deeply investigated and they are proved $\mathrm{PL^i} \subsetneq \mathrm{PL^{i+1}}$ and $\mathrm{PL} \subsetneq \mathrm{ST}$.		
		Then, two pseudo-tractable classes, $\mathrm{PTR}$ and $\mathrm{PTE}$, are proposed.
		$\mathrm{PTR}$ consists of all the problems that can solved by a RATM in sublinear time after a PTIME preprocessing by reducing the size of input dataset.
		$\mathrm{PTE}$ includes all the problems that can solved by a RATM in sublinear time after a PTIME preprocessing by extending the size of input dataset.
		The relations among the two pseudo-tractable classes and other complexity classes are investigated and they are proved that $\mathrm{PT} \subseteq \mathrm{P}$, $\sqcap'\mathrm{T^0_Q} \subsetneq \mathrm{PTR^0_Q}$ and $\mathrm{PT_P} = \mathrm{P}$.	
		\keywords{Big Data Computing, Tractability, Sublinear, Complexity theory}
	\end{abstract}

	\section{Introduction}
		
		Due to the limitation on computational power of existing computers, the challenges brought by big data suggest that the tractability should be re-considered for big data computing.
		Traditionally, a problem is tractable if there is an algorithm for solving the problem in time bounded by a polynomial (PTIME) in size of the input.
		In practice, PTIME no longer works for identifying the tractable problems in big data computing. 
		
		\begin{example}
			Sorting is a fundamental operation in computer science, and many efficient algorithms have been developed.
			Recently, some algorithms are proposed for sorting big data, such as Samplesort and Terasort.
			However, these algorithms are not powerful enough for big data since their time complexity is still $O(n\log{n})$ in nature. 
			We performed Samplesort and Terasort algorithms on a dataset with size 1 peta bytes, 1PB for short.
			The computing platform is a cluster of 33 computation nodes, each of which has 2 Intel Xeon CPUs, interconnected by a 1000 Mbps ethernet.
			The Samplesort algorithm took more than 35 days, and the Terasort algorithm took more than 90 days.
			
		\end{example}

		\begin{example}
			Even using the fastest solid state drives in current market, whose I/O bandwidth is smaller than 8GB per second \cite{PCIewiki}, a linear scan of a dataset with size 1 PB, will take 34.7 hours.
			The time of the linear scan is the lower bound of many data processing problems. 
		\end{example}
		
        Example 1 shows that PTIME is no more the good yardstick for tractability in big data computing.  
        Example 2 indicates that the linear time, is still unacceptable in big data computing.

		This paper suggests that the sublinear time should be the new tractable standard in big data computing.
		Besides the problems that can be solved in sublinear time directly, many problems can also be solved in sublinear time by adding a one-time preprocessing.
		For example, searching a special data in a dataset can be solved in $O(\log{n})$ time by sorting the dataset first, a $O(n\log{n})$ time preprocessing, where $n$ is the size of the dataset.
		
		To date, some effort has been devoted to providing the new tractability standard on big data.
		In 2013, Fan et al. made the first attempt to formally characterize query classes that are feasible on big data.
		They defined a concept of $\sqcap$-tractability, $i.e.$ a query is $\sqcap$-tractable if it can be processed in parallel polylogarithmic time after a PTIME preprocessing \cite{Fan2013Making}.
		Actually, they gave a new standard of tractability in big data computing, that is, a problem is tractable if it can be solved in parallel polylogarithmic time after a PTIME preprocessing.
		They showed that several feasible problems on big data conform their definition.
		They also gave some interesting results.
		This work is impressive but still need to be improved for the following reasons.		
		(1) Different from the traditional complexity theory, this work only focuses on the problem of boolean query processing.
		(2) The work only concerns with the problems that can be solved in parallel polylogarithmic time after a PTIME preprocessing.
		Actually, many problems can be solved in parallel polylogarithmic time without PTIME preprocessing.
		(3) The work is based on parallel computational models or parallel computing platforms without considering the general computational models.		
		(4) The work takes polylogarithmic time as the only standard for tractability. Polylogarithmic time is a special case of the sublinear time, and it is not sufficient for characterizing the tractbility in big data computing. 
			
		Similar to the $\sqcap$-tractability theory \cite{Fan2013Making}, Yang et al. placed a logarithmic-size restriction on the preprocessing result and relaxed the query execution time to PTIME and introduced the corresponding  $\sqcap'$-tractability \cite{Yang2017Tractable}.
		They clarified that a short query is tractable if it can be evaluated in PTIME after a one-time preprocessing with logarithmic-size output. 
		This work just pursued Fan et al.'s methodology, and there is no improvement on Fan et al.'s work \cite{Fan2013Making}. 
        Besides, the logarithmic restriction on the output size of preprocessing is too strict to cover all query classes that are tractable on big data.
        
        In addition, computation model is the fundamental in the theory of computational complexity.
        Deterministic Turing machine (DTM) is not suitable to characterize sublinear time algorithms since the sequential operating mode makes only the front part of input can be read in sublinear time.
        For instance, searching in an ordered list is a classical problem that can be solved in logarithmic time.
        However, if DTM is used to describe the computation procedure of it, the running time comes to polynomial.
        To describe sublinear time computation accurately and make all of the input can be accessed in sublinear time, random access is very significant.
        
		This paper is to further recognize the tractability in big data computing.
		A general computational model, random-access Turing machine, is used to characterize the problems that are tractable on big data, not only query problems.
		The sublinear time, rather than polylogarithmic time, is adopted as the tractability standard for big data computing.
		Two classes of tractable problems on big data are identified.
		The first class is called as pure-tractable class, including the problems that can be solved in sublinear time without preprocessing.
		The second class is called as pseudo-tractable class, consisting of the problems that can be solved in sublinear time after a PTIME preprocessing.
		The structure of the two classes is investigated, and the relations among the two classes and other existing complexity classes are also studied.
		The main contributions of this paper are as follows.
		
		(1) To describe sublinear time computation more accurately, the random-access Turing machine (RATM) is formally defined.
		RATM is used in the whole work of this paper.
		It is proved that the RATM is equivalent to the deterministic Turing machine in polynomial time.
		Based on RATM, an efficient universal random-access Turing machine $\mathcal{U}$ is devised.
		The input and output of $\mathcal{U}$ are $(x, c(M))$ and $M(x)$ respectively, where $c(M)$ is the encoding of a RATM $M$, $x$ is an input of $M$, and $M(x)$ is the output of $M$ on input $x$.
		Moreover, if $M$ halts on input $x$ within $T$ steps, then $\mathcal{U}$ halts within $cT\log{T}$ steps, where $c$ is a constant.
		
		(2) Using RATM and taking sublinear time as the tractability standard, the classes of tractable problems in big data computing are defined. 
		First, two pure-tractable complexity classes are defined. One is a polylogarithmic time class $\mathrm{PL}$, which is the set of problems that can be solved by a RATM in polylogarithmic time.
		The another one is a sublinear time class $\mathrm{ST}$, which is the set of all decision problems that can be solved by a RATM in sublinear time.
		Then, two pseudo-tractable classes, $\mathrm{PTR}$ and $\mathrm{PTE}$, 
		are first defined.
		$\mathrm{PTR}$ is the set of all problems that can be solved in sublinear time after a PTIME preprocessing by reducing the size of input dataset.
		$\mathrm{PTE}$ is the set of all problems that can be solved in sublinear time after a PTIME preprocessing by extending the size of input dataset.
		 
		(3) The structure of the pure-tractable classes is investigated deeply. 
		It is first proved that $\mathrm{PL}^i \subsetneq \mathrm{PL}^{i+1}$, where $\mathrm{PL}^i$ is the class of the problems that can be solved by a RATM in $O(\log^i{n})$ time and $n$ is the size of the input.
		Thus, a polylogarithmic time hierarchy is obtained. 
		It is proved that $\mathrm{PL}=\bigcup_i\mathrm{PL}^i \subsetneq \mathrm{ST}$.
		This result shows that there is a gap between polylogarithmic time class and linear time class. 
		It is also proved that $\mathrm{DLOGTIME}$ reduction \cite{Buss1987The} is closed for $\mathrm{PL}$ and $\mathrm{ST}$.
		The first $\mathrm{PL}$-complete problem and the first $\mathrm{ST}$-complete problem are given also.
		
		(4) The relations among 
		the complexity classes 
		$\mathrm{PTR}$, 
		$\mathrm{PTE}$, ,  
		$\sqcap'\mathrm{T^0_Q}$ \cite{Yang2017Tractable} and $\mathrm{P}$ is studied.
		They are proved that $\mathrm{PT} \subseteq \mathrm{P}$ and $\sqcap'\mathrm{T^0_Q} \subsetneq \mathrm{PTR^0_Q}$.
		Finally, it is concluded that all problems in $\mathrm{P}$ can be made pseudo-tractable.
		
		The remainder of this paper is organized as follows.
		Section 2 formally defines the complexity model RATM, proves that RATM is equivalent to DTM and there is an efficient URATM, and defines the problem in big data computing. 
		Section 3 defines the pure-tractable classes, and investigates the structure of the pure-tractable classes.
		Section 4 defines the pseudo-tractable classes, and studies the relations among the complexity classes 
		$\mathrm{PTR}$, 
		$\mathrm{PTE}$,  
		$\sqcap'\mathrm{T^0_Q}$ \cite{Yang2017Tractable} and $\mathrm{P}$.
		Finally, Section 5 concludes the paper.

	\section{Preliminaries}
		
		To define sublinear time complexity classes precisely, a suitable computation model should be chosen since sublinear time algorithms may read only a miniscule fraction of its input and thus random access is very important. 
		The random-access Turing machine is chosen as the foundation of the work in this paper.
		This section gives the formal definition of the random-access Turing machine.
		They are proved that the RATM is equivalent to the determinate Turing machine (DTM) in polynomial time and there is a universal random-access Turing machine.
		Finally, a problem in big data computing is defined.
	
		\subsection{Random-access Turing Machine}
		
			\noindent\textbf{random-access Turing machine} 
			A random-access Turing machine (RATM) $M$ is a $k$-tape Turing machine with an input tape and an output tape and is additionally equipped with $k$ binary index tapes that are write-only.
			One of the binary index tapes is for $M$'s read-only input tape and the others for the $M$'s $k-1$ work tapes.
			Note that $k \ge 2$.
			The formally definition of RATM is as follows. 
			
			\begin{definition}
				A \emph{RATM} $M$ is a 8-tuple $M = (Q, \Sigma, \Gamma, \delta, q_0, B, q_f, q_a)$, where
				
				$Q$: The finite set of states.
				
				$\Sigma$: The finite set of input symbols.
				
				$\Gamma$: The finite set of tape symbols, and $\Sigma \subseteq \Gamma$. 
				
				$\delta$: $Q \times \Gamma^k \to Q \times \Gamma^{k-1} \times \{0, 1, B\}^k \times \{L, S, R\}^{2k}$, where $k \ge 2$.
				
				$q_0 \in Q$: The start state of $M$.
				
				$B \in \Gamma \setminus \Sigma$: The blank symbol.
				
				$q_f \in Q$: The accepting state.
				
				$q_a \in Q$: The random access state.
				If $M$ enters state $q_a$, $M$ will move the heads of all non-index tapes to the cells described by the respective index tapes automatically.
			\end{definition}
			
			Assuming the first tape of a RATM $M$ is the input tape, if $M$ is in state $q \in Q$, $(\sigma_1, \cdots, \sigma_k)$ are the symbols currently being read in the $k$ non-index tapes of $M$, and the related transition function is $\delta(q, (\sigma_1, \cdots, \sigma_k)) = (q', (\sigma'_2, \cdots, \sigma'_k), (a_1, \cdots\\, a_k), (z_1, \cdots, z_{2k}))$,
			$M$ will replace $\sigma_i$ with $\sigma'_i$, where $2 \le i \le k$, write $a_j$ $(1 \le j \le k)$ on the corresponding index tape, move heads Left, Right, or Stay in place as given by $(z_1, \cdots, z_{2k})$, and enter the new state $q'$.
			
			The following lemmas state that RATM is equivalent to the deterministic Turing machine (DTM).
			% Due to space limitation, all the detailed proofs can be found in the appendix.
			
			\begin{lemma}\label{lema:ratm-tm}
				For a Boolean function $f$ and a time-constructible function \emph{\cite{arora2009computational}} $T$,   
				
				\emph{(1)} if $f$ is computable by a \emph{DTM} within time $T(n)$, then it is computable by a \emph{RATM} within time $T(n)$, and
				
				\emph{(2)} if $f$ is computable by a \emph{RATM} within time $T(n)$, then it is computable by a \emph{DTM} within time $T(n)^2\log{T(n)}$.
			\end{lemma}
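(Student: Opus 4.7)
Part (1) reduces to the observation that a DTM is essentially a RATM that never triggers its random-access feature. Given a $k$-tape DTM $M$, I would build a RATM $M'$ with the same state set (keeping $q_a$ but never entering it), the same tape alphabet, the DTM's tapes as the $k$ non-index tapes, plus $k$ write-only index tapes whose heads are held stationary. The new transition function is obtained by padding each original move with the symbol $B$ on the index tapes and the head move $S$ on each index head. Each step of $M$ then corresponds to exactly one step of $M'$, so the time bound $T(n)$ is preserved verbatim.

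Part (2) I would attack in two stages. In stage one I would simulate the RATM $M$ by a multi-tape DTM $M_1$ in time $O(T(n)^2)$. For each of $M$'s non-index tapes, $M_1$ maintains an auxiliary tape storing a list of (position, symbol) pairs for the non-blank cells, together with a binary counter holding the current head position. A non-random-access step of $M$ is simulated by scanning the list for the current position, updating the entry, and incrementing or decrementing the counter, in $O(T(n))$ time. A random-access step is simulated by copying the binary addresses off the index tapes (each of length at most $T(n)$) and overwriting the position counters. Since $M$ performs at most $T(n)$ writes, every list has length at most $T(n)$, so each of the at most $T(n)$ simulated steps costs $O(T(n))$, giving the quadratic total. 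In stage two I would apply the standard Hennie--Stearns simulation in \cite{arora2009computational} to convert $M_1$ into a $2$-tape DTM with a $\log$ slowdown, yielding the final bound $O(T(n)^2 \log T(n))$.

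The main obstacle sits in stage one: an address written on an index tape can in principle denote a position of magnitude up to $2^{T(n)}$, far beyond what any DTM of time $T(n)^2$ could physically traverse. A naive head-moving simulation would therefore blow up exponentially. The list representation is the key device for sidestepping this: because $M$ writes to at most $T(n)$ distinct cells per tape, every position outside the written set is implicitly $B$, so an index-tape address can be resolved by searching the list of actually-written positions rather than by moving a head to a potentially far-away physical cell. Once this encoding is in place, both the step-by-step simulation in stage one and the black-box invocation of Hennie--Stearns in stage two proceed by routine bookkeeping.
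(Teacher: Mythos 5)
Your part (1) is exactly the paper's argument (a DTM is a RATM that never invokes random access), and your key device for part (2) --- replacing each work tape by a list of (address, symbol) pairs so that a random access becomes a search over the written cells rather than a physical head traversal --- is precisely the paper's construction (it stores $*a_1\#c_1*a_2\#c_2\cdots$ on the simulating tape). The difference is purely in how the time bound is assembled, and there your accounting has a flaw. Each list entry carries a binary address, so the physical length of a list with at most $T(n)$ entries is not $O(T(n))$ but $\Theta(T(n)\log T(n))$ even when every accessed position is at most $T(n)$ (and, in the scenario you yourself raise, an address built on an index tape can have up to $T(n)$ bits, making the list as long as $\Theta(T(n)^2)$). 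Consequently a per-step scan costs $\Theta(T(n)\log T(n))$ rather than $O(T(n))$, and your stage one is not $O(T(n)^2)$: honestly counted it is already $O(T(n)^2\log T(n))$ on a multi-tape DTM, which is exactly where the paper's logarithmic factor comes from (the paper bounds the simulating tape length by $\sum_{i=1}^{T(n)}(\log i + 3) = \Theta(T(n)\log T(n))$ and multiplies by the $T(n)$ steps).

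This matters because you then spend a further $\log$ factor on the Hennie--Stearns reduction to two tapes: stacked on the corrected stage one it yields $O(T(n)^2\log^2 T(n))$, overshooting the lemma. The repair is simple and brings you back to the paper's proof: drop stage two entirely --- the lemma only asks for \emph{some} DTM, and the paper's simulator is a $2k$-tape DTM, so no tape-reduction is needed --- and charge the $\log T(n)$ to the address lengths inside the list, i.e., per-step cost $O(T(n)\log T(n))$ times $T(n)$ steps. (Note also that both your writeup and the paper implicitly assume addresses of $O(\log T(n))$ bits; your observation that addresses can in principle denote positions up to $2^{T(n)}$ is handled by the list representation only as far as traversal is concerned, not in the bit-length accounting, so a fully rigorous version would either bound the useful address length or absorb the longer addresses into the analysis.)
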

		
			\begin{proof}
				(1) is easy to be proved since RATM can simulate DTM step by step through omitting the random access ability of RATM.
				To prove (2), we can construct a $2k$-tape DTM $M$ to simulate a $k$-tape RATM $N$.
				$M$ uses $k$ tapes to simulate the $k$ index tapes of $N$, and the other $k$ tapes to simulate the $k$ non-index tapes of $N$.
				If the contents on a non-index tape $T$ of $N$ is $c_1, \cdots, c_j$, 
				the corresponding tape of $M$ will contain $* a_1 \# c_1 * a_2 \# c_2 \cdots * a_j \# c_j$, where $a_i$ is the address of $c_i$ on tape $T$ of $N$.
				Since $N$ stops in $T(n)$ steps, there are at most $T(n)$ non-blank symbols on each tape of $N$.
				Therefore, the length of the corresponding tape of $M$ is $\Sigma_{i=1}^{T(n)} (\log{i} + 3) = \Theta(T(n)\log{T(n)})$.
				$M$ simulates $N$ as follows.
				
				(1) If $N$ does not enter the random access state $q_a$, $M$ just acts like $N$;
				
				(2) If $N$ enters $q_a$, then $M$ first moves the heads of its $k$ non-index tapes to the leftmost, then moves from left to right to find the symbol $c_i$, where $a_i$ is equal to the address on the corresponding index tape of $N$. 
				
				Since the maximum length of $M$'s tapes is $T(n)\log{T(n)}$ and the running time of $N$ is $T(n)$, the running time of $M$ is at most $T(n)^2\log{T(n)}$.
				\qed
			\end{proof}
			
			\begin{corollary}\label{coro:ratm-tm}
				If a Boolean function $f$ is computable by a \emph{RATM} within time $o(n)$, then it is computable by a \emph{DTM} within time $n\log{n}o(n)$.
			\end{corollary}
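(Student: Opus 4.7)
The plan is straightforward: apply Lemma~\ref{lema:ratm-tm}(2) directly with $T(n) = o(n)$, then simplify the resulting bound. Since this corollary sits right below the lemma and only specializes the time bound to the sublinear regime, the heavy lifting has already been done and no new construction is required.

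First, I would invoke Lemma~\ref{lema:ratm-tm}(2) on the hypothesis that $f$ is computable by a RATM within time $T(n) = o(n)$, obtaining a DTM that computes $f$ within time $T(n)^2 \log T(n)$. (If the given sublinear bound $T$ is not itself time-constructible, I would first replace it with a time-constructible upper bound of the same asymptotic order, which is a standard move.)

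Second, I would massage the expression $T(n)^2 \log T(n)$ into the form stated in the corollary. Because $T(n) = o(n)$, there exists an $n_0$ such that $T(n) \le n$ for every $n \ge n_0$; hence $\log T(n) \le \log n$ for all such $n$. Therefore
\[
T(n)^2 \log T(n) \;=\; T(n) \cdot T(n) \cdot \log T(n) \;\le\; T(n) \cdot n \cdot \log n \;=\; n \log n \cdot o(n),
\]
which is exactly the bound claimed in the corollary.

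I do not anticipate any real obstacle here; the content of the result is entirely inherited from Lemma~\ref{lema:ratm-tm}. The only point requiring a line of care is the time-constructibility assumption in the lemma, which transfers to this corollary only up to the standard rounding-up trick.
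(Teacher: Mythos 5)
Your proposal is correct and matches the paper's own argument in substance: the paper simply reruns the simulation of Lemma~\ref{lema:ratm-tm} and notes that with $T(n)=o(n)$ the simulated tape length is at most $n\log n$, giving running time $o(n)\cdot n\log n$, which is exactly what you obtain by specializing the lemma's $T(n)^2\log T(n)$ bound and relaxing $T(n)\le n$, $\log T(n)\le\log n$. Your remark about time-constructibility is a reasonable extra precaution (the simulation itself never uses it), so there is no gap.
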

		
			\begin{proof}
				$M$ simulates $N$ in the same way as in the proof of lemma \ref{lema:ratm-tm}. 
				Since the maximum length of $M$'s tapes is $n\log{n}$ when the runtime of $N$ is $o(n)$ and the running time of $N$ is $o(n)$, the running time of $M$ is at most $o(n)n\log{n}$.
				\qed
			\end{proof}
	
		\subsection{The Universal Random-access Turing Machine}
		
			Just like DTM, RATM can be encoded by a binary string.
			The code of a RATM $M$ is denoted by $c(M)$.
			The encoding method of RATM is the same as that of DTM \cite{du2011theory}.	
			The encoding of RATM makes it possible to devise a universal random-access Turing machine (URATM) with input $(x, c(M))$ and outputs $M(x)$, where $x$ is the input of a RATM $M$, $c(M)$ is the code of $M$, and $M(x)$ is the output of $M$ on $x$.
			Before the formal introduction of URATM, we first present two lemmas.
			
			\begin{lemma}\label{lema:ratm-alph}
				For every function $f$, if $f$ is computable in time $T(n)$ by a \emph{RATM} $M$ using alphabet $\Gamma$, then it is computable in time $c_1T(n)$ by a \emph{RATM} $\tilde{M}$ using alphabet $\{0, 1, B\}$.
			\end{lemma}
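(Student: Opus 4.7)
The plan is to imitate the standard alphabet-reduction argument for Turing machines, paired with a small bookkeeping trick that keeps the random-access primitive $O(1)$ per step. Let $\ell = 2^m$ be the smallest power of two with $\ell \ge \lceil \log_2 |\Gamma| \rceil$, and fix an injective code $e \colon \Gamma \setminus \{B\} \to \{0,1\}^\ell$; the blank of $M$ will be represented by $\ell$ blanks $B\cdots B$ on $\tilde M$'s tapes. Build $\tilde M$ with the same number of work tapes and index tapes as $M$, hard-wire $\delta$ into the finite control, and maintain on each non-index tape the invariant that $M$'s cell at position $i$ occupies the $\ell$ consecutive cells $i\ell,\ldots,i\ell+\ell-1$, with $\tilde M$'s head always resting on a block boundary $i\ell$ whenever $M$'s head is at position $i$. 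A single non-random-access step of $M$ is then simulated by the obvious block procedure: read the $\ell$-bit block on each non-index tape into the finite control, consult $\delta$, write back the $\ell$-bit encoding of the new symbol, and translate each $\pm 1$ head move into a $\pm \ell$ move; since $M$'s index tapes are already binary, the writes and moves on them are mirrored one for one. This costs $O(\ell)=O(1)$ steps per step of $M$.

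The only subtle point is the behaviour of $q_a$. If $\tilde M$ stored the same address $i$ on its index tape as $M$, entering its own $q_a$ would land each head at position $i$ instead of the block boundary $i\ell$, and the naive workaround of multiplying the address by $\ell$ on the fly would cost $\Theta(\log T(n))$ per access and destroy the promised constant slowdown. My trick is to keep $\tilde M$'s work-tape index tapes physically shifted by $m = \log_2 \ell$ cells: at initialization $\tilde M$ moves each such index head $m$ positions to the right (constant cost), and every bit that $M$ subsequently writes at position $p$ of an index tape is mirrored by $\tilde M$ at position $p+m$. Under this invariant the integer encoded by $\tilde M$'s index tape is automatically $2^m \cdot i = \ell \cdot i$, so a single step of $\tilde M$'s own $q_a$ teleports each non-index head to the correct block boundary and matches $M$ exactly, still in one step. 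The index tape of the input tape needs no such offset, since we do not re-encode the input; this presumes $\Sigma \subseteq \{0,1,B\}$, as otherwise $\tilde M$ could not even receive the same input on its input tape.

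Summing, each step of $M$ is simulated by a constant number of steps of $\tilde M$ and initialization contributes only $O(m) = O(1)$ extra, so the total running time is at most $c_1 T(n)$ for some constant $c_1$ depending only on $|\Gamma|$. The main obstacle is precisely the random-access point above, the remaining pieces being the routine block-simulation bookkeeping; the resolution is the combination of choosing $\ell$ a power of two and pre-shifting the work-tape index heads, which converts the otherwise expensive multiplication by $\ell$ into a one-time constant-cost realignment.
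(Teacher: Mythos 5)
Your overall plan coincides with the paper's: encode each symbol of $\Gamma$ by a block of $\ell=2^m$ bits, simulate each step of $M$ by a constant number of steps of $\tilde M$, and make the random-access step cost $O(1)$ by arranging that $\tilde M$'s index tape encodes $\ell\cdot i$ whenever $M$'s encodes $i$. The gap is in how you obtain the factor $\ell$. Shifting the index head $m$ cells to the right and mirroring every write at position $p+m$ multiplies the encoded address by $2^m$ only under the convention that the cell at the origin is the \emph{least} significant bit, and only if the $m$ skipped cells actually contain $0$'s (you only move the head; you never write them, so they hold $B$). The paper's RATM definition does not fix this convention, and its own proof implicitly uses the opposite one: there the simulated address $c_j$ is kept \emph{followed} by a suffix $0^b$, i.e.\ the tape is read most-significant-bit-first and zeros appended at the right end do the scaling. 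Under that reading your leading offset contributes nothing (it merely prepends blanks/leading zeros), the $q_a$ jump lands at cell $i$ rather than at the block boundary $i\ell$ — inside the wrong block — and the simulation breaks. So the key sentence ``the integer encoded by $\tilde M$'s index tape is automatically $2^m\cdot i$'' is unjustified as written: you must either stipulate the least-significant-bit-at-the-origin convention explicitly and write the $m$ zeros at initialization, or maintain a trailing $0^m$ suffix as the paper does. The latter is not free in your one-tape-per-index-tape setup: index tapes are write-only, so $\tilde M$ cannot inspect them to locate the current end of the address; this is exactly why the paper's $\tilde M$ has $2k$ tapes, keeping a readable copy of each index tape on an auxiliary work tape and spending $O(b)$ extra steps per simulated step to keep the suffix in place — still a constant, so the $c_1T(n)$ bound is unaffected.

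Two smaller points. The lemma is stated for arbitrary $\Gamma\supseteq\Sigma$, so you cannot simply presume $\Sigma\subseteq\{0,1,B\}$; the paper re-encodes the input in binary and applies the same address-scaling treatment to the input tape's index tape as well. Also, $M$ may write $B$ on an index tape (the transition allows $a_j\in\{0,1,B\}$), a case your positional mirroring should address explicitly; the paper treats it as a separate subcase. Neither of these affects your constant-overhead accounting, which is otherwise in order.
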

		
			\begin{proof}
				Let $M = (Q, \Sigma, \Gamma, \delta, q_0, B, q_f, q_a)$ be a $k$-tape RATM which computes $f$ in time $T(n)$.
				We define a $2k$-tape RATM $\tilde{M} = (Q', \{0, 1\}, \{0, 1, B\}, \delta', q'_0, B, q'_f, q'_a)$ in the follows to compute $f$ in time $cT(n)$.
				
				Let $M$'s non-index tapes and index tapes be numbered from $1$ to $k$, $\tilde{M}$'s non-index tapes, and index tapes be numbered from $1$ to $2k$.
				Let $b$ is the least number satisfying $2^b \ge \log{|\Gamma|} \ge 2^{b-1}$ bits.
				Every symbol of $\Gamma$ be encoded using binary code with length $2^b$. 
				
				Thus, the $j$th non-index tape of $\tilde{M}$ simulates the $j$th non-index tape of $M$ using the binary codes above for $1 \le j \le k$, 
				that is, there are $2^b$ cells in $\tilde{M}$'s $j$th non-index tape for every cell in $M$'s $j$th non-index tape.
				
				$\tilde{M}$ utilizes a non-index tape and an index tape to simulate an index tape of $M$.
				More precisely, $\tilde{M}$'s $(j+k)$th non-index tape is used to make a calculation on the contents of $M$'s $j$th index tape.
				And $\tilde{M}$'s $j$th index tape works like the $j$th index tape of $M$.
				The concrete operations will be introduced later.
				If the contents of $M$'s $j$th index tape of $M$ is $c_j$, the contents of $\tilde{M}$'s $(j+k)$th non-index tape and $j$th index tape are $c_j 0^b$.
				And let $Q' = Q \times \{0, 1, B\}^{k \times 2^b}$, $q'_0 = (q_0, B^{k \times 2^b})$, $q'_f = (q_f, B^{k \times 2^b})$ and $q'_a = (q_a, B^{k \times 2^b})$.
				
				The initial configuration of $\tilde{M}$ is as follows.
				
				(1) Input $x$, which is encoded, is stored in input tape, and the state of $\tilde{M}$ is $q'_0$.
				
				(2) $0^b$ is stored in $\tilde{M}$'s last $k$ non-index tapes and first $k$ index tapes.
				
				To simulate one step of $M$ from state $q$, $\tilde{M}$ starts from state $(q, B^{k \times 2^b})$ and acts as follows.
				
				(1) $\tilde{M}$ uses $2^b$ steps to read from its first $k$ non-index tapes.
				After that, the state of $\tilde{M}$ becomes $(q, c(\sigma_1), \cdots, c(\sigma_k))$, where $c(\sigma_i)$ denotes the code of $\sigma_i$.
				
				Assume that there is a transition function of $M$ is $\delta(q, (\sigma_1, \cdots, \sigma_k)) = (q', (\sigma'_2,\\ \cdots, \sigma'_k), (a_1, \cdots, a_k), (z_1, \cdots, z_{2k}))$.
				
				(2) $\tilde{M}$ uses $2^b$ steps to write $c(\sigma_2), \cdots, c(\sigma_k)$ on its first $k$ non-index tapes except the input tape since the input tape is read-only.
				
				(3) $\tilde{M}$ uses $2b$ steps to write $a_1, \cdots, a_k$ on its index tapes.
				More precisely, 
				if $a_j = 0$ or $1$, $\tilde{M}$  moves the head of the $(j+k)$th non-index tape $a$ locations to the right and checks whether there is $0^{b-1}B$.
				If so, it writes $0$, and moves $b$ locations to the left and writes $a_j$.
				If not, it moves $b$ locations to the left and writes $a_j$. 
				If $a_j = B$, $\tilde{M}$ moves the head of the $(j+k)$th non-index tape $b+1$ locations to the right and checks whether there is $0^bB$.
				If so, it moves one location to the left and writes $B$, then it moves $b$ locations to the left and writes $0$.
				
				(4) $\tilde{M}$ uses $2^b$ steps to move the heads of non-index tapes according to $z_1, \cdots, z_k$ and uses one step to move the heads of index tapes according to $z_{k+1}, \cdots, z_{2k}$.
				
				(5) Then $\tilde{M}$ enters $(q', B^{k \times 2^b})$.
				
				Note that $\tilde{M}$ uses $3 \times 2^b + 2b$ steps to simulate one step of $M$.
				Thus, the total number of steps of $\tilde{M}$ is at most $c_1T(n)$ where $c_1$ is a constant depending on the size of the alphabet.
				\qed
			\end{proof}
		
			\begin{lemma}\label{lema:ratm-tape}
				For every function $f$, if $f$ is computable in time $T(n)$ by a \emph{RATM} $M$ using $k$ tapes, then it is computable in time $c_2T(n)\log{T(n)}$ by a 5-tape \emph{RATM} $\tilde{M}$. 
			\end{lemma}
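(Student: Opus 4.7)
The plan is to construct $\tilde{M}$ by adapting the classical Hennie--Stearns multitape-to-two-tape simulation to the RATM setting, using random access to handle the $q_a$ transitions. By Lemma~\ref{lema:ratm-alph} I may assume without loss of generality that $M$'s alphabet is $\{0,1,B\}$, so that each cell holds a single bit, and track only the constant-factor adjustment absorbed into $c_2$ at the end.

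I would allocate $\tilde{M}$'s five non-index tapes as follows: tape~1 is the input tape, inherited directly from $M$; tapes~2 and~3 jointly encode the contents of $M$'s $k-1$ non-input non-index tapes in a two-tape Hennie--Stearns block layout; tapes~4 and~5 jointly encode the contents of $M$'s $k$ index tapes in a second independent Hennie--Stearns block layout. Within each layout, the cells of the encoded tapes are stored in columns, each column being a $(k-1)$-tuple (respectively $k$-tuple) of symbols, and the columns are partitioned into blocks of sizes $1,1,2,2,4,4,\ldots,2^i,2^i,\ldots$ extending to the left and right of a central column. I would adopt the standard invariant that after each simulated step of $M$, the symbols currently scanned by $M$'s heads occupy the central column of each layout.

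To simulate one step of $M$, $\tilde{M}$ reads the central columns (obtaining all symbols currently scanned by $M$), consults $M$'s transition table in its finite control, writes the new symbols, and then performs block-shifts to translate the central column by one position in the direction each head moved; the standard Hennie--Stearns amortized argument shows the total cost of these block-shifts over any $T(n)$-step computation is $O(T(n)\log T(n))$. To simulate a random-access step from state $q_a$, $\tilde{M}$ first reads the addresses currently stored on the index-tape encoding (tapes~4 and~5) by walking through its block structure from the center; it copies each address onto the corresponding index tape of $\tilde{M}$ and enters its own random-access state, which brings the heads of tapes~2 and~3 to the column holding the target cell. This replaces the $k$ simultaneous head jumps of $M$ with $O(\log T(n))$ work on each layout.

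The main obstacle will be the cost analysis for the $q_a$ transitions: classical Hennie--Stearns charges block reorganization against sequential moves, and an arbitrary jump could in principle disrupt the block invariant. I would handle this by storing each encoded tape so that the absolute address of a cell can be recovered from its column index (for example, by redundantly writing a binary address inside each column), so that after a random-access jump, $\tilde{M}$ does not physically rebuild the block structure but merely moves its head to the target column using its index tape, paying $O(\log T(n))$ per jump. Summing the $O(\log T(n))$ amortized cost for sequential simulated steps and the $O(\log T(n))$ worst-case cost for random-access steps over all $T(n)$ steps of $M$ yields the claimed bound $c_2 T(n)\log T(n)$.
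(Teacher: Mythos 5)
There is a genuine gap in your handling of the random-access steps, and it is precisely the point where the Hennie--Stearns machinery breaks down. The amortized $O(T\log T)$ analysis of Hennie--Stearns depends on the invariant that, after every simulated step, the cells currently scanned by \emph{all} simulated heads sit in the central column, and on the fact that heads drift only one cell per step so that re-centering can be charged against block potentials. A $q_a$ transition moves every simulated work-tape head to an arbitrary address, and the addresses differ from track to track; since the logical-to-physical correspondence in the block layout is different for each track (each track has been shifted independently), the newly scanned cells land in \emph{different} physical columns, so a single head position of $\tilde{M}$ can no longer read them all, and restoring the central-column invariant would require re-shifting each track by up to $\Theta(T(n))$ cells. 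With up to $T(n)$ random-access steps this gives a worst case of $\Theta(T(n)^2)$, not $O(T(n)\log T(n))$, and nothing in the amortized argument absorbs it. Your proposed fix --- storing the logical address redundantly in each column and ``merely moving the head to the target column'' --- does not rescue the bound: the RATM's random access addresses \emph{physical} cells, not logical ones, and the physical index of a logical cell changes with every block shift, so you cannot compute the index-tape contents needed for the jump in $O(\log T(n))$ time; writing a $\log T(n)$-bit address inside every column also inflates each column to $\Theta(\log T(n))$ cells, spoiling the constants in the shifting analysis; and even if the jump succeeded, you never explain how sequential simulation resumes afterwards, since the scanned cells are no longer centered and the potential-function argument no longer applies.

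The paper's proof avoids all of this by not using Hennie--Stearns at all: because $\tilde{M}$ is itself a RATM, there is no need to drag scanned cells to a fixed physical location. The paper stores the $k-2$ work tapes of $M$ on fixed tracks of one ``main work tape'' (so each logical cell has a \emph{static} physical address), keeps the current head positions of $M$'s work tapes explicitly as $O(\log T(n))$-bit counters on a separate ``usual movement tape,'' and keeps the simulated index tapes on a ``main index tape.'' Each simulated step --- whether sequential or a $q_a$ jump --- is then executed by copying the relevant ($\log T(n)$-bit) position onto $\tilde{M}$'s own index tape and using one random access, plus $O(\log T(n))$ bookkeeping to increment/decrement counters, giving a worst-case (not amortized) cost of $O(\log T(n))$ per step and hence $c_2T(n)\log T(n)$ overall. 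If you replace your block-shift layout with this static-address scheme plus explicit head counters, your argument goes through; as written, the random-access case is unsound.
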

		
			\begin{proof}
				If $M = (Q, \Sigma, \Gamma, \delta, q_0, B, q_f, q_a)$ is a $k$-tape RATM that computes $f$ in $T(n)$ time, we describe a $5$-tape RATM $\tilde{M} = (Q', \Sigma, \Gamma', \delta', q'_0, B', q'_f, q'_a)$ computing $f$ in time $c_2T(n)\log{T(n)}$.
				
				$\tilde{M}$ uses its input tape and output tape in the same way as $M$ does.
				The first work tape of $\tilde{M}$ named main work tape is used to simulate the contents of $M$'s $k-2$ work tape.
				The second work tape of $\tilde{M}$ named main index tape is used to simulate the contents of $M$'s $k-2$ index tape of $k-2$ work tapes.
				The last work tape of $\tilde{M}$ name usual movement tape is used to store the positions of heads of $M$'s $k-2$ work tapes.
				Note that each of the three work tapes has $k-2$ tracks, each which simulates one work tape of $M$.
				
				The symbol in one cell of $\tilde{M}$'s main work tape is in $\Gamma^{k-2}$, each corresponding to a symbol on a work tape of $M$.
				The symbol in one cell of $\tilde{M}$'s main index tape and usual movement tape is in $\{0, 1, B\}^{k-2}$, each corresponding to a symbol on an index tape of $M$.
				Each track of the main index tape has a symbol in $\{\hat{0}, \hat{1}, \hat{B}\}$ to indicate the head position of corresponding index tape.
				Each track of the usual movement tape records the head's location of the corresponding work tape.
				Hence, we have $Q' = Q \times \Gamma^k$, $\Gamma' = \Gamma \cup \{\Gamma \cup \hat{\Gamma}\}^{k-2}$, $q'_0 = (q_0, B^k)$, $B' = \{B, B^{k-2}\}$, $q'_f = (q_f, B^k)$ and $q'_a = (q_a, B^k)$.
				
				The initial configuration of $\tilde{M}$ is as follows.
				
				(1) Input $x$ is stored in the input tape, and the state of $\tilde{M}$ is $q'_0$
				
				(2) Main work tape, main index tape and output tape are empty.
				
				(3) $0^{k-2}$ is stored in the usual movement tape since the $k-2$ heads of $M$'s work tapes are at the first blank symbol whose index is $0$.
				
				For a computational step of $M$ in state $q$, $\tilde{M}$ starts from the state $(q, B^k)$ to simulate $M$ as follows.
				
				(1) Read $\sigma_1$ and $\sigma_k$ from input and output tapes respectively.
				
				(2) $\tilde{M}$ uses $k-2$ times reads to simulate the parallel reads  of $M$ to gather the symbols on $k-2$ work tapes of $M$:
				
				(2.1) Read the head positions on the $i$th track from the usual movement tape and stores them into the index tape of the main work tape;
				
				(2.2) Enter the random access state $q'_a$ to read the corresponding symbol $\sigma_i$ of the $i$th track from the main work tape for $1 \leq i \leq k-2$;
				
				After that, $\tilde{M}$ is in state $(q, \sigma_1, \cdots, \sigma_k)$.
				
				If there is a transition function of $M$, $\delta(q, (\sigma_1, \cdots, \sigma_k)) = (q', (\sigma'_2, \cdots, \sigma'_{k}),\\ (a_1, \cdots, a_k), (z_1, z_2, \cdots, z_{2k}))$, then  $\tilde{M}$ continues its work as follows.
				
				(3) For the input tape, $\tilde{M}$ moves the head according to $z_1$, writes $a_1$ on the index tape and moves the heads of index tape according to $z_{k+1}$.
				For the output tape, $\tilde{M}$ writes $\sigma'_k$ and moves the head according to $z_k$, writes $a_k$ on the index tape and moves the heads of index tape according to $z_{2k}$.
				
				(4) $\tilde{M}$ uses $k-2$ times writes to write $(\sigma'_2, \cdots, \sigma'_{k-1})$ to the main work tape and modify usual movement tape:
				
				(4.1) Read $i$th track from the usual movement tape into the index tape of the main work tape,
				
				(4.2) Enter the random access state $q'_a$ and then write the symbol $\sigma'_{i+1}$ to the cell, and
				
				(4.3) If $z_{i+1}$ is $L$, reduce $1$ from the number on the $i$th track of the usual movement tape.
				If $z_{i+1}$ is $R$, add $1$ to the number on the $i$th track of the usual movement tape.
				If $z_{i+1}$ is $S$, do nothing.
				
				(5) $\tilde{M}$ uses $k-2$ times writes to write $(a_2, \cdots, a_{k-1})$ to the main index tape:
				
				(5.1) Scan main index tape from left to right to find the head position $p$ on the $i$th track, then writes the $a_{i+1}$ on it.
				
				(5.2) If $z_{k+i+1}$ is $L$, replace the symbol $a$ before $p$ with $\hat{a}$.
				If $z_{k+i+1}$ is $R$, replace the symbol behind $p$ with $\hat{a}$.
				If $z_{k+i+1}$ is $S$, replace the symbol $a$ at $p$ with $\hat{a}$.
				
				(6) Change state to $(q', B^k)$. Then, if $q'$ is $q_a$, copy the main index tape to the usual movement tape.
				
				Now, we analyze the number of steps that $\tilde{M}$ uses to simulate one step of $M$.
				Since the running time of $M$ is $T$, we can assume that the max length of work tapes $M$ used in computation is $T$ without loss of generality.
				Thus, the max length of indexes written on the index tape by $M$ is $\log{T}$.
				$\tilde M$ uses the main work tape to simulate $k-2$ work tapes of $M$,  and uses the main index tape and the usual movement tape to simulate $k-2$ index tapes of $M$ and $k-2$ head positions of $M$'s work tapes.
				Thus, the length of main work tape is at most $T$ and the length of main index tape and usual movement tape are at most $\log{T}$.
				Recall the operations mentioned above, read a track on the usual movement tape and write to the index tape take at most $\log{T}$ steps.
				The random access takes only one step.
				And the modification of the usual movement tape takes at most $\log{T}$ steps.
				And the copy from the main index tape to the usual movement tape takes at most $\log{T}$ steps.
				It can be seen that $\tilde{M}$ uses $c_2\log{T}$ steps to simulate one steps of $M$. Since $M$ takes $T$ steps in total, and thus the total number of steps taken by $\tilde{M}$ is $c_2T\log{T}$, where $c_2$ is a constant depending on the number of tapes.
				\qed
			\end{proof}
		
			\begin{theorem}\label{thrm:uratm}
				There exists a universal random-access Turing machine $\mathcal{U}$, whose input is $(x, c(M))$ and outputs is $M(x)$, where $x$ is an input of $M$, $c(M)$ is the code of $M$, and $M(x)$ is the output of $M$ on $x$. 
				Moreover, if $M$ halts on input $x$ in $T$ steps then $\mathcal{U}$ halts on input $(x, c(M))$ in $O(cT\log{T})$ steps, where $c$ is a constant depending on $M$.
			\end{theorem}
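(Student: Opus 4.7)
The plan is to construct $\mathcal{U}$ in two stages: first apply Lemmas~\ref{lema:ratm-alph} and~\ref{lema:ratm-tape} to normalize the target machine $M$, then design a single universal simulator for the normalized form. First I would invoke Lemma~\ref{lema:ratm-alph} to replace $M$ by an equivalent RATM $M_1$ with fixed alphabet $\{0,1,B\}$ running in time $c_1 T$; then I would apply Lemma~\ref{lema:ratm-tape} to replace $M_1$ by an equivalent $5$-tape RATM $M_2$ running in time $c_2 c_1 T \log(c_1 T) = O(T \log T)$. This pre-reduction is the key move, because it fixes both the alphabet and the number of tapes once and for all, allowing $\mathcal{U}$ itself to have an architecture that does not depend on $M$. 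Since $c(M_2)$ can be computed from $c(M)$ by hard-wiring the two constructions in the previous lemmas, $\mathcal{U}$ can obtain $c(M_2)$ on one of its tapes in $O(|c(M)|)$ preparatory steps.

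Next I would specify $\mathcal{U}$ as a RATM with a constant number of tapes over $\{0,1,B\}$ whose job is to simulate any $5$-tape binary RATM whose encoding it is given. $\mathcal{U}$ dedicates one tape to store $c(M_2)$, one to track the current simulated state of $M_2$, and additional tapes (with tracks) to mirror the five tapes of $M_2$ together with their index tapes. The input $x$ sits on $\mathcal{U}$'s input tape and is used as the input for the simulated machine. In each simulated step, $\mathcal{U}$ reads the simulated state and the five scanned symbols, performs one linear scan of $c(M_2)$ to locate the matching transition, and then writes the prescribed symbols, updates the index tapes, and moves the heads accordingly. When $M_2$ enters its random-access state $q_a$, $\mathcal{U}$ enters its own random-access state on the corresponding tracks; because $\mathcal{U}$ has genuine random access, a simulated random access costs $O(1)$ of $\mathcal{U}$'s steps rather than a sweep across the simulated tape.

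Putting the pieces together, one step of $M_2$ is simulated by $\mathcal{U}$ in $O(|c(M_2)|)$ steps, i.e.\ a constant $c$ depending only on $M$. Multiplying by the $O(T \log T)$ bound on the running time of $M_2$ gives the overall bound $O(c\,T \log T)$ claimed in the statement, with $c$ absorbing $|c(M)|$ together with the constants $c_1$ and $c_2$ from Lemmas~\ref{lema:ratm-alph} and~\ref{lema:ratm-tape}.

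The main obstacle, as I see it, is making sure the $\log T$ factor is not incurred twice. The $\log T$ already appears in Lemma~\ref{lema:ratm-tape} because a single main work tape with an index track is used to simulate $k-2$ work tapes, and each simulated random access there involves writing a $\log T$-bit address. I must verify that $\mathcal{U}$'s additional overhead when simulating $M_2$ is purely constant per step, which relies on $M_2$ already being $5$-taped and binary so that the transition-function lookup scans a string of size $|c(M_2)|$ rather than anything depending on $T$ or $n$. Once this is checked, the $\log T$ appears exactly once, yielding the stated $O(cT\log T)$ bound.
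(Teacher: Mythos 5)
Your overall route is the same as the paper's: normalize $M$ via Lemmas \ref{lema:ratm-alph} and \ref{lema:ratm-tape}, then let a fixed universal RATM simulate the normalized machine step by step, paying per simulated step only a scan of the stored transition table (a constant depending on $M$), so that the $\log T$ factor enters exactly once, through the tape-reduction lemma. The one genuine problem is the order in which you apply the two lemmas and the resulting claim that $M_2$ is ``5-taped and binary.'' With this paper's constructions you cannot have both properties at once: Lemma \ref{lema:ratm-tape} packs the $k-2$ work tapes into tracks, so its output machine works over an alphabet of tuples whose arity depends on $M$; applying it \emph{after} Lemma \ref{lema:ratm-alph}, as you do, therefore yields a 5-tape machine over an $M$-dependent alphabet, not a binary one. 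This hits exactly the point you yourself flag as critical: your constant-per-step bound needs the fixed-alphabet $\mathcal{U}$ to mirror $M_2$'s tapes cell-for-cell, so that a simulated random access is free because $\mathcal{U}$'s index tapes carry the very same address bits. If $M_2$'s alphabet depends on $M$, then $\mathcal{U}$ must block-encode each simulated cell, and a simulated random access to address $a$ forces $\mathcal{U}$ to translate $a$ into a scaled address (roughly $b\cdot a$ for block length $b$), an arithmetic step on a $\Theta(\log T)$-bit number that your accounting omits and that naively adds another $O(\log T)$ per access, i.e.\ $O(T\log^{2}T)$ overall.

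The repair is simply to reverse the order, which is what the paper does implicitly: apply Lemma \ref{lema:ratm-tape} first (5 tapes, enlarged alphabet) and Lemma \ref{lema:ratm-alph} second (binary alphabet, 10 tapes). The normalized machine is then a 10-tape RATM over $\{0,1,B\}$, and a fixed 12-tape binary $\mathcal{U}$ can use ten of its tapes exactly as that machine does---index tapes and random accesses included, at no extra cost---keeping two extra work tapes for the transition table and the current state. With that change your timing argument, a constant $(c_3+3)$ overhead per step times the $O(T\log T)$ running time of the normalized machine, goes through verbatim and matches the paper's bound.
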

			
			\begin{proof}
				From Lemma \ref{lema:ratm-alph} and Lemma \ref{lema:ratm-tape}, we only need to design a URATM $\mathcal{U}$ to simulate any 10-tape RATM. 
				Let $\mathcal{U}$ be a $12$-tape RATM defined as
				\begin{eqnarray*}
					\mathcal{U}=(Q, \{0, 1\}, \{0, 1, B\}, \delta, q_0, B, q_f, q_a),
				\end{eqnarray*}
				where the input of $\mathcal{U}$ is $(x, c(M))$, $x$ is an input of a 10-tape RATM $M$ with alphabet $\{0, 1, B\}$, and $c(M)$ is the code of $M$.	
				
				$\mathcal{U}$ will use its input tape, output tape and the first eight work tapes in the same way $M$ does.
				In addition, the transition functions of $M$ are stored in the first extra work tape of $\mathcal{U}$.
				The current state of $M$ and symbols read by $M$ are stored in the second extra work tape of $\mathcal{U}$.
				
				$\mathcal{U}$ simulates one computational step of $M$ on input $x$ as follows.
				
				(1) $\mathcal{U}$ stores the state of $M$, and the symbols read from input tape, output tape and the first eight work tape to the second extra work tape.
				
				(2) $\mathcal{U}$ scans the table of $M$'s transition function stored in the first extra work tape to find out the related transition function.
				
				(3) $\mathcal{U}$ replaces the state stored in the second work tape to the new state, writes symbols and moves heads.
				If the new state is random access state, then $\mathcal{U}$ enters $q_a$.
				
				Now, we analyze the number of steps that $\mathcal{U}$ uses to simulate one step of a 10-tape RATM $M$.
				In step (1), $\mathcal{U}$ takes one step to read symbols from its input, output and the first eight work tapes.
				In step (2), $\mathcal{U}$ makes a linear scan to find the related transition function, and it takes $c_3$ steps, where $c_3$ is the size of the transition function of $M$.
				In step (3), $\mathcal{U}$ uses two steps to write symbols to its input, output and the first eight work tapes and moves heads of them.
				Since $M$ halts on input $x$ within $T$ steps, then $\mathcal{U}$ will halts in $(c_3+3)T$ steps.
				
				Since any $k$-tape RATM that stops in $T$ steps can be simulated by a 10-tape RATM with alphabet $\{0, 1, B\}$ in $c'T\log{T}$ steps, any $k$-tape RATM can be simulated by $U$ in $(c_3+3)(c'T\log{T})$ steps, i.e. $O(cT\log{T})$, where $c=c_3c'$ depending on $M$. 
				\qed
			\end{proof}
						
			Theorem \ref{thrm:uratm} is an encouraging result which can help us to investigate the structure of sublinear time complexity classes.
			
		\subsection{Problems in Big Data Computing}
		
			To reflect the characteristics in big data computing, a problem in big data computing is defined as follows.
			
			INPUT: big data $D$, and a function $\mathcal{F}$ on $D$.
			
			OUTPUT: $\mathcal{F}(D)$.
			
			Unlike the traditional definition of a problem, the input of a big data computing problem must be big data, where big data usually 
			has size greater than or equal to 1 PB.
			The problem defined above is often called as big data computing problem.
			The big data set in the input of a big data computing problem may consists of multiple big data sets.
			The problems discussed in the rest of paper are big data computing problems, and we will simply call them problems in the rest of the paper.
          	
	\section{Pure-tractable Classes}
	
		In this section, we first give the formal definitions of the pure-tractable classes $\mathrm{PL}$ and $\mathrm{ST}$, and then investigate the structure of the pure-tractable classes.
		
		\subsection{Polylogarithmic-tractable Class $\mathrm{PL}$}
	
			As mentioned in \cite{van1990handbook}, the class $\mathrm{DLOGTIME}$ consists of all problems that can be solved by a RATM in $O(\log{n})$ time, where $n$ is the length of the input.
			$\mathrm{DLOGTIME}$ was underestimated before \cite{barrington1990uniformity} \cite{Buss1987The}.
			However, it is very impotent in big data computing \cite{Fan2013Making} and there are indeed many interesting problems in this class \cite{barrington1990uniformity}.
			In this section, we propose the complexity class $\mathrm{PL}$ by extending the $\mathrm{DLOGTIME}$ to characterize problems that are tractable on big data, and inspired by $\mathrm{DLOGTIME}$ and $\mathrm{NC}$ hierarchy, we use $\mathrm{PL}^i$ to reinterpret $\mathrm{PL}$ as a hierarchy.
			
			\begin{definition}
				The class $\mathrm{PL}$ consists of decision problems that can be solved by a \emph{RATM} in polylogarithmic time.
			\end{definition}
			
			\begin{definition}
				For each $i \ge 1$, the class $\mathrm{PL}^i$ consists of decision problems that can be solved by a \emph{RATM} in $O(\log^i{n})$, where $n$ is the length of the input.
			\end{definition}
			
			According to the definition, $\mathrm{PL}^1$ is equivalent to $\mathrm{DLOGTIME}$.
			It is clear that $\mathrm{PL}^1 \subseteq \mathrm{PL}^2 \subseteq \cdots \subseteq \mathrm{PL}^i \subseteq \cdots \subseteq \mathrm{PL}$, which forms the $\mathrm{PL}$ hierarchy. 
			The following Theorem \ref{thrm:pl-pl} shows that $\mathrm{PL}^i \subsetneq \mathrm{PL}^{i+1}$ for $i \in \mathbb{N}$.
			
			\begin{lemma} {\normalfont \cite{dowd1986notes}}\label{lema:dlg-len}
				There is a logarithmic time \emph{RATM}, which takes $x$ as input and generates the output $n$ encoded in binary such that $n = |x|$.
			\end{lemma}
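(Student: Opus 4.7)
The plan is to exploit the random access capability of the RATM to locate the last non-blank cell of the input tape by an exponential-doubling search followed by a bit-by-bit refinement, each running in $O(\log{n})$ steps.

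Phase 1 (doubling). Maintain on the input-tape's index tape the binary representation of a candidate position, starting with position $1$. After each access, double the position to $2^1, 2^2, 2^3, \ldots$; because $2^i$ differs from $2^{i-1}$ by moving a single $1$-bit one cell further from the low end, each doubling requires only $O(1)$ tape operations. After each update, enter the random access state $q_a$, so that the input head jumps to the indexed cell, and check whether the symbol stored there equals the blank $B$. Terminate the doubling as soon as the first blank is found, say at position $2^k$; at that point we know $n \in (2^{k-1}, 2^k]$, and we have used $O(k) = O(\log{n})$ steps in total.

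Phase 2 (bit-by-bit refinement). Maintain on the index tape a lower bound $L$ guaranteed to satisfy $L \le n$ (initially $L = 2^{k-1}$, whose cell was already verified non-blank in Phase 1). For $i = k-1, k-2, \ldots, 0$, tentatively set the $i$-th bit of $L$, enter $q_a$ to inspect the symbol at the newly indexed cell of the input tape, and keep the bit set if that symbol is non-blank (meaning $L + 2^i \le n$), or clear it otherwise. Each iteration flips a constant number of bits on the index tape and triggers a single random access, so the phase terminates in $O(\log{n})$ steps with $L$ equal to the largest position whose cell is non-blank. Phase 3: adjust $L$ by $1$ via a standard logtime carry propagation (if required by the chosen encoding convention) and then copy its binary representation from the index tape onto the output tape in $O(\log{n})$ additional steps.

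The main subtlety is ensuring that the bookkeeping on the index tape is $O(1)$ per iteration rather than $O(\log{n})$: if one rewrote the full binary representation of the current position from scratch on each step, the total would blow up to $O(\log^2{n})$. The plan avoids this by performing only local bit flips and by maintaining $L$ monotonically in place on the index tape, so that writing the successive trial positions is essentially free. Small inputs (the cases $n = 0$ and $n = 1$) are handled by inspecting the first cell before Phase 1 begins and short-circuiting appropriately; time-constructibility of $\log{n}$ ensures that the machine can keep track of when to stop Phase 2.
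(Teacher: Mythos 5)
The paper itself does not prove this lemma---it is imported verbatim from the cited notes of Dowd---so there is no internal proof to compare against; your doubling-plus-binary-search argument is exactly the standard construction for computing $|x|$ on a random-access machine, and its accounting is sound: $O(\log n)$ random accesses in the galloping phase, $O(\log n)$ in the refinement phase, and $O(1)$ index-tape bookkeeping per access because each trial position differs from the previous one in a single bit. One wrinkle is specific to the RATM as defined in this paper: the index tapes are \emph{write-only}, so your Phase~3, which ``copies the binary representation from the index tape onto the output tape,'' is not a legal operation---the machine can never read back what it wrote on an index tape. The fix is routine: mirror every bit you commit to the index tape on an ordinary (read/write) work tape, or write each decided bit of $L$ directly to the output tape as Phase~2 resolves it from the most significant position downward; this costs $O(1)$ extra work per iteration and preserves the $O(\log n)$ bound. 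Two smaller points: after Phase~1 the correct bracket is $n\in[2^{k-1},2^k)$ (position $2^{k-1}$ non-blank, position $2^k$ blank), which only shifts the initialization of the refinement, and the appeal to time-constructibility of $\log n$ at the end is unnecessary---the machine knows when Phase~2 ends simply because the index-tape head reaches the least significant bit position it is tracking.
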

					
			\begin{theorem}\label{thrm:pl-pl}
				For any $i \in \mathbb{N}, \mathrm{PL}^i \subsetneq \mathrm{PL}^{i+1}$.
			\end{theorem}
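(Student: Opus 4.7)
The strategy is a diagonalization in the spirit of the Hartmanis--Stearns time hierarchy theorem, carried out inside the RATM model and powered by the efficient universal RATM $\mathcal{U}$ from Theorem~\ref{thrm:uratm}. The key quantitative observation is that if a RATM $M$ halts in $T$ steps then $\mathcal{U}$ simulates it in $O(cT\log T)$ steps, so when $T = \log^{i} n$ the simulation cost is $O(c\log^{i} n \log\log n)$, which is $o(\log^{i+1} n)$ for any fixed $c$. This leaves enough slack inside $\log^{i+1} n$ time to simulate every $\log^{i}n$-time RATM and then negate its answer.

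First I would construct a separating language. Let $L$ be the set of strings $x$ of the form $\langle c(M), 1^k\rangle$ such that $\mathcal{U}$, on input $(x, c(M))$, rejects within $\log^{i+1}|x|$ simulated steps (inputs not of this shape are assigned to a default side). To show $L \in \mathrm{PL}^{i+1}$, a RATM $D$ deciding $L$ proceeds on input $x$ as follows: (i) use Lemma~\ref{lema:dlg-len} to obtain $n=|x|$ in binary in $O(\log n)$ time; (ii) compute the step bound $\log^{i+1} n$, whose binary representation has only $O(\log\log n)$ bits; (iii) locate the prefix $c(M)$ of $x$ via random access; (iv) run $\mathcal{U}$ on $(x, c(M))$ with a binary step counter incremented after each simulated step and abort upon reaching the bound; (v) output the negation of $\mathcal{U}$'s output, treating a timeout as acceptance. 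Using amortized $O(1)$ binary counter increments, the total running time is $O(\log^{i+1} n)$.

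Next I would prove $L \notin \mathrm{PL}^{i}$. Suppose for contradiction that some RATM $M_0$ decides $L$ in $c_0 \log^{i} n$ steps. By Theorem~\ref{thrm:uratm}, $\mathcal{U}$ simulates $M_0$ on length-$n$ inputs in $c \cdot c_0\log^{i} n \cdot \log(c_0\log^{i} n) = O(\log^{i} n \log\log n)$ steps, where $c$ and $c_0$ depend only on $M_0$. Pick $k$ large enough that, for $n = |\langle c(M_0), 1^k\rangle|$, this quantity is strictly below $\log^{i+1} n$; this is possible because $\log\log n / \log n \to 0$. For the input $x_k = \langle c(M_0), 1^k\rangle$, the simulation inside the definition of $L$ then runs $M_0$ to completion, so $x_k \in L$ iff $M_0$ rejects $x_k$, contradicting the assumption that $M_0$ decides $L$.

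The main obstacle is careful bookkeeping of machine-dependent constants: the simulation overhead $c$ in Theorem~\ref{thrm:uratm} is not uniform over all RATMs, so one cannot hope to bound it once and for all. The padding argument via $1^k$ is precisely what absorbs this non-uniformity, stretching $n$ until the product $c \cdot c_0$ is dominated by the factor $\log n / \log\log n$ of slack between $\log^{i+1} n$ and $\log^{i} n\log\log n$. A secondary concern, routine once noticed, is that the step bound $\log^{i+1} n$ itself must be computed and maintained in time $o(\log^{i+1} n)$; this follows from Lemma~\ref{lema:dlg-len} together with the amortized analysis of the binary step counter.
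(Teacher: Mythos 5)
Your proposal is correct and follows essentially the same route as the paper: a Hartmanis--Stearns-style diagonalization using the $O(T\log T)$ universal RATM $\mathcal{U}$ of Theorem~\ref{thrm:uratm}, clocked at $\log^{i+1}n$ steps, with padding of the machine encoding to absorb the machine-dependent simulation constants (the paper merely implements the clock as a parallel $\Theta(\log^{i+1}n)$-time machine built from Lemma~\ref{lema:dlg-len} rather than an explicit counter). One small slip to fix for internal consistency: your language $L$ puts timeouts outside $L$, while your decider treats a timeout as acceptance; either convention works, but the definition and the decider must agree.
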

			
			\begin{proof}
				We prove this theorem by constructing a RATM $M^*$ such that $L(M^*) \in \mathrm{PL}^{i+1} - \mathrm{PL}^i$.
				
				According to Lemma \ref{lema:dlg-len}, there exists a RATM $M_1$, which takes $x$ as input and outputs the binary form of $n = |x|$ in $c\log{n}$ time.
				
				Since $n^{i+1}$ is a polynomial time constructible function for any $i \in \mathbb{N}$, there exists a DTM $M_2$ that takes $x$, whose length is $n$, as input and outputs the binary form of $n^{i+1}$ in time $n^{i+1}$.
				
				By combining $M_1$ and $M_2$, we can construct a RATM $M$ that works as follows.
				Given an input $x$, $M$ first simulates $M_1$ on $x$ and outputs the binary form of $n = |x|$. 
				Then, $M$ simulates $M_2$'s on the binary form of $n = |x|$.
				The total running time of $M$ is $\log^{i+1}{n}$.
				
				Now, we are ready to construct $M^*$.
				On any input $x$, $M^*$ works as follows: 
				
				(1) $M^*$ simulates the computation of $M$ on input $x$ and the computation of $\mathcal{U}$ on input $x$ simultaneously. $\mathcal{U}$ is slightly changed at the process of parsing the input, that is $\mathcal{U}$ copies each symbol of $x$ to its first extra tape as the transition functions until it reads $3$ continuous $1$. Then it acts the same as introduced in Theorem \ref{thrm:uratm}.
				
				(2) Any one of $\mathcal{U}$ and $M$ halts, $M^*$ halts, and the state entered by $M^*$ is determined as follow:
				
				(a) If $\mathcal{U}$ halts first and enters the accept state, then $M^*$ halts and enters the reject state.
				
				(b) If $\mathcal{U}$ halts first and enters the reject state, then $M^*$ halts and enters the accept state.
				
				(c) If $M$ halts first and enters state $q$, then $M^*$ halts and enters $q$.
				
				The running time of $M^*$ is at most $\log^{i+1}{n}$, so $L(M^*) \in \mathrm{PL}^{i+1}$.
				
				Assume that there is a $\log^i{n}$ time RATM $N$ such that $L(N) = L(M^*)$.
				Since $\lim\limits_{n \to \infty} \frac{\log^i{n}\log{\log^i{n}}}{\log^{i+1}{n}} = 0$, there must be a number $n_0$ such that $\log^i{n}\log{\log^i{n}} < \log^{i+1}{n}$ for each $n \ge n_0$.
				Let $x$ be a string representing the machine $N$ whose length is at least $n_0$. Such string exists since a string of a RATM can be added any long string behind the encoding of the RATM.
				We have $Time_\mathcal{U}(x, x) \le cTime_{N}(x)\log{(Time_{N}(x))} \le c\log^i{n}\log{\log^i{n}} \le \log^{i+1}{n}$.
				It means that $\mathcal{U}$ halts before $M$, which is in contradiction with (a) and (b) of $M^*$'s work procedure.  
				\qed	
			\end{proof}
		
		\subsection{Sublinear-tractable Class $\mathrm{ST}$}
	
			To denote all problems can be solve in sublinear time, the complexity class $\mathrm{ST}$ is proposed in this subsection.
			And the relation between $\mathrm{ST}$ and $\mathrm{PL}$ is investigated.
			We first give the formal defintiion of $\mathrm{ST}$.
			
			\begin{definition}
				The class $\mathrm{ST}$ consists of the decision problems that can be solved by a \emph{RATM} in $o(n)$ time, where $n$ is the size of the input.
			\end{definition}
			
			There are indeed many problems that can solved in $o(n)$ time, such as searching in a sorted list, point location in two-dimensional Delaunay triangulations, and checking whether two convex polygons intersect mentioned in \cite{chazelle2005sublinear} \cite{Czumaj2010Sublinear} \cite{Rubinfeld2011Sublinear}.

			To understand the structure of pure-tractable classes, we study the relation between $\mathrm{ST}$ and $\mathrm{PL}$.
			Theorem \ref{thrm:pl-st} shows that $\mathrm{ST}$ contains $\mathrm{PL}$ properly.
			This result indicates that there is a gap between polylogarithmic time class and linear time class.
			
			\begin{theorem}\label{thrm:pl-st}
				$\mathrm{PL} \subsetneq \mathrm{ST}$
			\end{theorem}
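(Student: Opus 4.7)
The plan is to separate the two parts of the statement. The inclusion $\mathrm{PL} \subseteq \mathrm{ST}$ is immediate from the definitions: any $L \in \mathrm{PL}$ lies in some $\mathrm{PL}^i$, hence is decided by a RATM in $O(\log^i n)$ steps, and $\log^i n = o(n)$. The substantive part is to exhibit a language in $\mathrm{ST}\setminus\mathrm{PL}$, which I would obtain by a diagonalization argument in the spirit of Theorem \ref{thrm:pl-pl}, except that the outer time budget is taken to be a function that is strictly between polylogarithmic and linear.

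Concretely, I would fix a time bound $T(n)$ that is $\omega(\log^k n)$ for every fixed $k$ yet satisfies $T(n)\log T(n) = o(n)$; a convenient choice is $T(n) = \lfloor\sqrt{n}\rfloor$, whose binary representation can be computed from $n$ in polylogarithmic time on a RATM. Using Lemma \ref{lema:dlg-len} to read off $n=|x|$ in $O(\log n)$ steps and then writing $T(n)$ onto a dedicated clock tape, I would construct a RATM $M^*$ that, on input $x$, simulates the universal machine $\mathcal{U}$ of Theorem \ref{thrm:uratm} on the pair $(x,x)$, decrementing the clock after each simulated step. If $\mathcal{U}$ halts before the clock reaches zero, $M^*$ halts and outputs the complement of $\mathcal{U}$'s accept/reject verdict; otherwise $M^*$ halts and accepts. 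Because the clock has only $O(\log n)$ bits and each simulated step together with its clock tick costs at most $O(\log n)$ work on a RATM, the total running time is $O(T(n)\log n) = o(n)$, placing $L(M^*)$ in $\mathrm{ST}$.

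To rule out $L(M^*) \in \mathrm{PL}$, I would assume for contradiction that some RATM $N$ decides $L(M^*)$ in $\log^k n$ time for a fixed $k$. Theorem \ref{thrm:uratm} then bounds the running time of $\mathcal{U}$ on $(x,c(N))$ by $c\log^k n\cdot\log\log^k n$, which is $o(T(n))$ since $T$ is superpolylogarithmic. Using the fact that the RATM encoding admits arbitrary padding, so that every $N$ possesses codes of arbitrary length, I would pick $x = c(N)$ long enough for the $\mathcal{U}$-simulation inside $M^*$ to finish inside the clock window; then $M^*(x)$ would flip $N(x)$, contradicting $L(N) = L(M^*)$.

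The main obstacle I anticipate is not conceptual but the careful bookkeeping of the clocked simulation: I need to check that interleaving one $\mathcal{U}$-step with one clock-decrement can genuinely be carried out by a single RATM without pushing the total running time above $o(n)$, and that the code-padding convention used implicitly in Theorem \ref{thrm:uratm} really yields infinitely many legal codes of every RATM, so that diagonal witnesses exist at unboundedly large input lengths. Both points are standard once the encoding $c(\cdot)$ is fixed, but they are precisely the details that must be spelled out for the diagonalization to close.
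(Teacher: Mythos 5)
Your proposal is correct and follows essentially the same route as the paper: a diagonalization against the universal machine $\mathcal{U}$ of Theorem \ref{thrm:uratm} at a time budget around $\sqrt{n}$, which is superpolylogarithmic yet keeps the total work in $o(n)$, together with padding of machine codes to obtain diagonal witnesses at large lengths. The only difference is cosmetic: you realize the time budget as an explicit countdown clock of $O(\log n)$ bits, whereas the paper races $\mathcal{U}$ against a separate machine $M$ that halts in $\Theta(\sqrt{n}\log^2 n)$ steps, and both yield the same contradiction with a hypothetical $\log^k n$-time decider.
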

		
			\begin{proof}
				First, we define RATIME($t$) to be the class of problems that can be solved by a RATM in time $O(t(n))$.
				It is obviously that $\mathrm{PL}^i \subseteq$ RATIME($\sqrt{n}\log^2{n}$).
				Hence, $\mathrm{PL} = \bigcup_{i \in \mathbb{N}}\mathrm{PL}^i \subseteq \bigcup_{i \in \mathbb{N}}$ RATIME($\sqrt{n}\log^2{n}$) = RATIME($\sqrt{n}\log^2{n}$).
				Similar to the proof of Theorem 2, we show that there is a RATM $M^*$ such that $L(M^*) \in$ RATIME($\sqrt{n}\log^2{n}$) $- \mathrm{PL}$.
				
				We first construct a RATM $M$, which halts in $\Theta(\sqrt{n}\log^2{n})$ steps on input $x$ with size $n$.
				$M$ first computes binary form of $n=|x|$ according to Lemma \ref{lema:dlg-len}.
				Then $M$ enumerates binary number from $1$ to$\sqrt{n}$.
				In the $i$th enumeration step, the binary number $i$ is first enumerated.
				Then, $M$ computes $i \times i$ and makes a comparison with $n$.
				$M$ halts if and only if $i \times i > n$.
				The maximum number enumerated by $M$ is $\sqrt{n}$, the enumeration takes $\log{n}$ time, the multiplication takes $\log^2{n}$ time and the comparison takes $\log{n}$ time.
				So the running time of $M$ is $\Theta(\sqrt{n}\log^2{n})$.
				
				We construct $M^*$ as follows. On an input $x$,
				
				(1) $M^*$ simulates the computation of $M$ on input $x$ and the computation of $\mathcal{U}$ on input $x$ simultaneously. $\mathcal{U}$ is slightly changed at the process of parsing the input, that is $\mathcal{U}$ copies each symbol of $x$ to its first extra tape as the transition functions until it reads $3$ continuous $1$. Then it acts the same as introduced in Theorem \ref{thrm:uratm}.
				
				(2) Any one of $\mathcal{U}$ and $M$ halts, $M^*$ halts, and the halt state of $M^*$ is determined as follow:
				
				(a) If $\mathcal{U}$ halts first and enters the accept state, then $M^*$ halts and enters the reject state.
				
				(b) If $\mathcal{U}$ halts first and enters the reject state, then $M^*$ halts and enters the accept state.
				
				(c) If	$M$ halts first and enters state $q$, then $M^*$ halts and enters state $q$.
				
				The running time of $M^*$ is at most $\sqrt{n}\log^2{n}$, so $L(M^*) \in$ RATIME($\sqrt{n}\log^2{n}$).
				
				Assume that for some $i \in \mathbb{N}$ there is a $\log^i{n}$ time RATM $N$ such that $L(N) = L(M^*)$.
				Since $\lim\limits_{n \to \infty} \frac{\log^i{n}\log{\log^i{n}}}{\sqrt{n}\log^2{n}} = 0$, there must be a number $n_0$ such that $\log^i{n}\log{\log^i{n}} < \sqrt{n}\log^2{n}$ for each $n \ge n_0$.
				Then, let $x$ be a string representing the machine $N$ whose length is at least $n_0$.
				Such string exists since a string of a RATM can be added any long string behind the encoding of the RATM.
				We have $Time_\mathcal{U}(x, x) \le cTime_{N}(x)\log{(Time_{N}(x))} \le c\log^i{n}\log{\log^i{n}} \le \sqrt{n}\log^2{n}$.
				It means that $\mathcal{U}$ halts before $M$, which generates a contradiction with (a) and (b) of $M^*$'s work procedure. \qed
			\end{proof}
			
		\subsection{Reduction and Complete Problems}
			
			In this subsection, we first give the definition of $\mathrm{DLOGTIME}$ reduction \cite{Buss1987The}. Then, it is proved that $\mathrm{PL}$ and $\mathrm{ST}$ is closed under the $\mathrm{DLOGTIME}$ reduction.
			Moreover, the $\mathrm{PL}$-completeness and $\mathrm{ST}$-completeness are defined under $\mathrm{DLOGTIME}$ reduction.
			Finally, a $\mathrm{PL}$-complete problem and a $\mathrm{ST}$-complete problem are given.
			
			\begin{definition}{\normalfont \cite{Buss1987The}}	
				A polynomial reduction $f$ from a problem $A$ to a problem $B$ is a $\mathrm{DLOGTIME}$ reduction if the language $\{(x, i, c) :$ the $i$th bit of $f(x)$ is $c\}$ is in $\mathrm{DLOGTIME}$. 
			\end{definition}
		
			The definition of $\mathrm{DLOGTIME}$ reduction is different from the reductions what we use before.
			It requires that the checking time of a specific location is logarithmic, but the total time of reduction is bounded by polynomial time.
			The following two theorems show that $\mathrm{PL}$ and $\mathrm{ST}$ are closed under $\mathrm{DLOGTIME}$ reduction.
		
			\begin{theorem}\label{thrm:dlg-pl}
				If $B \in \mathrm{PL}^i$ and there is a $\mathrm{DLOGTIME}$ reduction from $A$ to $B$ then $A \in \mathrm{PL}^{i+1}$.
			\end{theorem}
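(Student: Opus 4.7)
The plan is to decide $A$ by running the $\mathrm{PL}^i$ algorithm for $B$ \emph{without ever materializing} $f(x)$, fetching each required bit of $f(x)$ on demand via the $\mathrm{DLOGTIME}$ machine for the reduction. Concretely, let $M_B$ be a RATM that decides $B$ in time $O(\log^i m)$ on inputs of length $m$, and let $M_R$ be the $O(\log)$-time RATM that, on input $(x,i,c)$, accepts iff the $i$th bit of $f(x)$ equals $c$. I will construct a RATM $M_A$ for $A$ that, on input $x$ of length $n$, simulates $M_B$ step by step, maintaining the contents of all of $M_B$'s work tapes and index tapes explicitly but treating $M_B$'s input tape as a virtual tape whose bits are served by calls to $M_R$.

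The key step is the simulation of a single $M_B$-step that reads the input tape at some position $p$ (either by ordinary head movement or after entering the random-access state $q_a$). Since $M_A$ knows $p$ from $M_B$'s simulated input-index tape, $M_A$ first writes $(x,p,0)$ onto a scratch tape and runs $M_R$; if $M_R$ rejects, $M_A$ supplies symbol $1$ to the simulated $M_B$, otherwise symbol $0$. The encoding of $(x,p,c)$ has length $n + O(\log m)$, so $M_R$ finishes in $O(\log n + \log\log m) = O(\log n)$ steps, since $f$ is polynomially bounded and hence $\log m = O(\log n)$. An auxiliary point is that $M_A$ must also recognize when $p$ exceeds $|f(x)|$; this can be handled either by a standard $\mathrm{DLOGTIME}$ length-query convention for the reduction or by an $O(\log n)$ binary search on $p$ using $M_R$ to find the blank boundary, which is dominated by the same bound.

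For the running-time analysis: $M_B$ takes $O(\log^i m) = O(\log^i n)$ computational steps, and each step of $M_B$ other than an input read is simulated by $M_A$ in $O(1)$ time (the index/work-tape bookkeeping is identical, and $M_A$ inherits random-access on its own tapes from Definition~1). The at most $O(\log^i n)$ input reads each cost $O(\log n)$ time to serve via $M_R$. Summing gives $O(\log^{i+1} n)$, so $L(M_A) = A \in \mathrm{PL}^{i+1}$.

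I expect the main obstacle to be verifying that no hidden overhead inflates the per-step cost above $O(\log n)$. The delicate points are: ensuring $M_A$ can produce the string $(x,p,c)$ for $M_R$ in $O(\log n)$ time (achievable because $p$ has $O(\log n)$ bits and $M_A$ can feed $x$ to $M_R$ by random access rather than copying), and handling the out-of-range/length question without resorting to a linear scan. Once these two implementation details are pinned down, the rest of the argument is a direct bookkeeping calculation.
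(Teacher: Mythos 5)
Your proposal is correct and takes essentially the same route as the paper: simulate $M_B$ directly, answer each of its (at most $O(\log^i n)$) input-tape reads by an $O(\log n)$-time call to the $\mathrm{DLOGTIME}$ decision procedure for the reduction language, charge non-reading steps at unit cost, and conclude the $O(\log^{i+1} n)$ bound. The only cosmetic differences are that the paper resolves a read by testing each alphabet symbol in turn rather than a single $(x,p,0)$ query, and it is less explicit than you are about $\log|f(x)| = O(\log n)$ and about serving $x$ to the reduction machine without copying it.
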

		
			\begin{proof}
				Let $M_B$ be the RATM which solves $B$ in time $O(\log^i{n})$.
				Let $f$ be the $\mathrm{DLOGTIME}$ reduction from $A$ to $B$.
				We construct a RATM $M_A$ which solves $A$ in $O(\log^{i+1}{n})$ time.
				On an input $x \in A$, $M_A$ simulates $M_B$ as follows.
				
				(1) For moves of $M_B$, which do not read input tape, $M_A$ directly simulates $M$.
				
				(2) For moves of $M_B$, which read input tape, assuming the $i$th symbol of input is reading, $M_A$ checks whether the $i$th symbol of $f(x)$ is $a$ for each $a$ in input tape symbol set of $M_B$. 
				When it finds the correct symbol $a$, it continues the simulation of $M_B$.
				
				Assume that $M_B$ stops on input $x$ in $\log^i{n} = k + l$ steps, in which $k$ steps do not read input tape, and $l$ steps need to read input tape.  
				
				To simulate the $k$ steps, which do not read input tape, $M_A$ needs $k \leq \log^i{n}$ steps.
				
				Assume that the $i$th step of $M_B$ read input tape.
				To simulate the $i$th step, $M_A$ needs to get the input symbol $a$ in $f(x)$ by scanning the input tape symbol set of $M_B$, which needs $c\log{n}$ steps by the definition of the $\mathrm{DLOGTIME}$ reduction, where $c$ is the size of input tape symbol set of $M_B$. 
				Thus, to simulate the $l$ steps of $M_B$, which read input tape, $M_A$ needs $lc\log{n} \leq c\log^i{n}\log{n}$ since $l \leq \log^i{n}$.
				
				In summary, $M_A$ needs at most $\log^i{n} + c\log^i{n}\log{n}=O(\log^i{n}\log{n})$, that is $O(\log^{i+1}{n})$ steps.
				\qed
			\end{proof}

            From the theorem \ref{thrm:dlg-pl}, we can directly derive the following corollary \ref{coro:pl-close}.		
			
			\begin{corollary}\label{coro:pl-close}
				 $\mathrm{PL}$ is closed under $\mathrm{DLOGTIME}$ reduction.
			\end{corollary}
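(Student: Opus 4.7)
The plan is to derive the corollary directly from Theorem \ref{thrm:dlg-pl} by unfolding the hierarchy definition of $\mathrm{PL}$. Recall that by the discussion preceding Theorem \ref{thrm:pl-pl}, $\mathrm{PL} = \bigcup_{i \ge 1} \mathrm{PL}^i$, so membership in $\mathrm{PL}$ is witnessed by membership in some specific level $\mathrm{PL}^i$.

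First I would take an arbitrary problem $B \in \mathrm{PL}$ together with a $\mathrm{DLOGTIME}$ reduction $f$ from some problem $A$ to $B$, and unpack the definition to obtain an index $i \in \mathbb{N}$ such that $B \in \mathrm{PL}^i$. Next I would invoke Theorem \ref{thrm:dlg-pl} on this pair $(A, B)$ with the same reduction $f$, which yields $A \in \mathrm{PL}^{i+1}$. Finally, since $\mathrm{PL}^{i+1} \subseteq \bigcup_{j \ge 1} \mathrm{PL}^j = \mathrm{PL}$, I conclude $A \in \mathrm{PL}$, which is exactly the closure statement.

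There is no substantive obstacle here: the corollary is a one-step logical consequence of Theorem \ref{thrm:dlg-pl} combined with the union characterization of $\mathrm{PL}$. The only thing worth flagging is that the index shifts from $i$ to $i+1$, so closure holds at the level of $\mathrm{PL}$ as a whole even though no individual level $\mathrm{PL}^i$ is claimed to be closed under $\mathrm{DLOGTIME}$ reduction. Accordingly, the write-up should be just a few lines, explicitly noting the $i \mapsto i+1$ shift and the absorption $\mathrm{PL}^{i+1} \subseteq \mathrm{PL}$ to make clear why the corollary does not contradict the strictness $\mathrm{PL}^i \subsetneq \mathrm{PL}^{i+1}$ from Theorem \ref{thrm:pl-pl}.
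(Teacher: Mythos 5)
Your proposal is correct and is exactly the paper's route: the paper derives the corollary directly from Theorem \ref{thrm:dlg-pl}, and your argument just spells out the same one-step deduction (pick $i$ with $B \in \mathrm{PL}^i$, apply the theorem, absorb $\mathrm{PL}^{i+1}$ into $\mathrm{PL} = \bigcup_j \mathrm{PL}^j$). Your explicit remark about the $i \mapsto i+1$ shift is a nice clarification but does not change the approach.
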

		
			\begin{theorem}\label{thrm:st-close}
				$\mathrm{ST}$ is closed under $\mathrm{DLOGTIME}$ reduction.
			\end{theorem}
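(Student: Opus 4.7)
The plan is to mirror the proof of Theorem~\ref{thrm:dlg-pl}, replacing the polylogarithmic bound on $M_B$ by a sublinear one. Given $B \in \mathrm{ST}$, I would fix a RATM $M_B$ that decides $B$ in time $T_B(m) = o(m)$ on inputs of length $m$, and let $f$ be the $\mathrm{DLOGTIME}$ reduction from $A$ to $B$, witnessed by a RATM $M_f$ that on input $(x,i,c)$ decides whether the $i$th bit of $f(x)$ equals $c$ in $O(\log n)$ time, where $n = |x|$.

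I would then construct a RATM $M_A$ that simulates $M_B$ on the virtual input $f(x)$ without ever materializing $f(x)$. Internal moves of $M_B$ (those that do not touch its input tape) are simulated step for step in $O(1)$ time each. Whenever $M_B$ tries to read its $i$th input symbol, $M_A$ invokes $M_f$ on $(x,i,c)$ for each of the constantly many candidate symbols $c$ in $M_B$'s input alphabet, recovers the correct symbol in $O(\log n)$ time, and then resumes the simulation. The random-access capability of RATMs is essential here, since $M_B$ may query arbitrary positions of $f(x)$ and $M_A$ must service each such query in $O(\log n)$ time without writing $f(x)$ to a tape. Splitting the at most $T_B(m)$ steps of $M_B$ into $k$ internal moves and $l$ input-reading moves with $k + l \le T_B(m)$, the simulation runs in at most $k + l \cdot c \log n = O(T_B(m) \log n)$ steps, where $c$ depends only on $M_B$'s input alphabet.

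The main obstacle will be the final time-accounting step: turning the bound $O(T_B(m) \log n)$ into a genuine $o(n)$ bound. Unlike the $\mathrm{PL}$ case, where $\log^i m = O(\log^i n)$ absorbs any polynomial relation between $m$ and $n$, here $T_B(m)$ is only guaranteed to be $o(m)$ while $m = |f(x)|$ can in principle be as large as $n^k$. To close the argument, I would exploit the fact that $M_f$ answers bit queries in time logarithmic in $n$ rather than in $m$, and combine this with the intended big-data semantics of $\mathrm{DLOGTIME}$ reductions (a near-linear size bound on $|f(x)|$, so that $m = O(n)$) and the observation that $M_B$ actually accesses only $o(m)$ positions of $f(x)$. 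Under those conditions the dominant $o(m) \log n$ contribution collapses to $o(n)$, yielding $A \in \mathrm{ST}$ and completing the proof.
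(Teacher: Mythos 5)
Your simulation of $M_B$ on the virtual input $f(x)$ --- simulating non-input moves step for step, and recovering each queried bit of $f(x)$ by running the reduction machine on $(x,i,c)$ for the constantly many candidate symbols $c$ at cost $O(\log n)$ per query --- is exactly the construction in the paper's proof of this theorem (and of Theorem~\ref{thrm:dlg-pl}), and it arrives at the same bound $k + l\cdot c\log n = O(T\log n)$. So the constructive part of your proposal coincides with the paper's.

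The problem is the final accounting step, which you yourself flag as unresolved, and the patches you sketch do not close it. First, the paper's definition of a $\mathrm{DLOGTIME}$ reduction only requires $f$ to be a polynomial reduction whose bit-graph $\{(x,i,c)\}$ is decidable in $O(\log n)$ time; it imposes no near-linear bound on $|f(x)|$, so the assumption $m = O(n)$ is an extra hypothesis you are importing, not something available from the definition. Second, even granting $m = O(n)$, the claimed collapse fails: $T_B(m) = o(m)$ with $m = \Theta(n)$ gives only $T_B(m)\log n = o(n)\cdot\log n$, which is not $o(n)$ in general (take $T_B(m) = m/\log\log m$). The remark that $M_B$ accesses only $o(m)$ positions of $f(x)$ does not help, because the number of input accesses is precisely the quantity $l \le T_B(m)$ that you already multiplied by $\log n$. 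For comparison, the paper's proof does not confront either issue: it writes $M_B$'s running time as $T(n)$ in the length $n$ of $x$ and simply asserts that $T(n) + cT(n)\log n = O(T(n)\log n)$ is $o(n)$, a step that is only valid under a stronger hypothesis such as $T(n) = o(n/\log n)$. So your instinct about where the difficulty sits is correct, but the proposal as written does not supply an argument that closes it, and the conditional reasoning you offer in its place is both unlicensed by the definitions and, in its last step, false as stated.
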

		
			\begin{proof}
				Let $M_B$ be the RATM which solves $B$ in time $T(n) = o(n)$.
				Let $f$ be the $\mathrm{DLOGTIME}$ reduction from $A$ to $B$.
				We construct a RATM $M_A$ which solves $A$ in $o(n)$ time.
				On an input $x \in A$, $M_A$ simulates $M_B$ as follows.
				
				(1) For moves of $M_B$, which do not read input tape, $M_A$ directly simulates $M$.		
				
				(2) For moves of $M_B$, which read input tape, assuming $i$th symbol of input is reading, $M_A$ checks whether the $i$th symbol of $f(x)$ is $a$ for each $a$ in input tape symbol of $M_B$.
				When it finds the correct symbol $a$, it continues the simulation of $M_B$.
				
				Assume that $M_B$ stops on input $x$ in $T(n) = k + l$ steps, in which $k$ steps do not read input tape, and $l$ steps need to read input tape.
				
				To simulate the $k$ steps, which do not read input tape, $M_A$ needs $k \le T(n)$ steps.
				
				Assume that the $i$th step of $M_B$ read input tape.
				To simulate the $i$th step, $M_A$ needs tp get the input symbol $a$ in $f(x)$ by scanning the input tape symbol set of $M_B$, which needs $c\log{n}$ steps by the definition of the $\mathrm{DLOGTIME}$ reduction,
				where $c$ is the size of input tape symbol set of $M_B$.
				Thus, to simulate the $l$ steps of $M_B$, which read input tape, $M_A$ needs $lc\log{n} \le c\log{n}T(n)$ since $l \le T(n)$.
				
				In summary, $M_A$ needs at most $T(n) + cT(n)\log{n} = O(T(n)\log{n})$, that is $o(n)$ steps.			\qed
			\end{proof}
			
			The definitions of $\mathrm{PL}$-completeness and $\mathrm{ST}$-completeness under $\mathrm{DLOGTIME}$ reduction are given in the following.
			
			\begin{definition}
				A problem $L$ is $\mathrm{PL}$-hard under $\mathrm{DLOGTIME}$ reduction if there is a $\mathrm{DLOGTIME}$ reduction from $L'$ to $L$ for all $L'$ in $\mathrm{PL}$.
				A problem $L$ is $\mathrm{PL}$-complete under $\mathrm{DLOGTIME}$ reduction if $L \in \mathrm{PL}$ and $L$ is $\mathrm{PL}$-hard.
			\end{definition}
			
			\begin{definition}
				A problem $L$ is $\mathrm{ST}$-hard under $\mathrm{DLOGTIME}$ reduction if there is a $\mathrm{DLOGTIME}$ reduction from $L'$ to $L$ for all $L'$ in $\mathrm{ST}$.
				A problem $L$ is $\mathrm{ST}$-complete under $\mathrm{DLOGTIME}$ reduction if $L \in \mathrm{ST}$ and $L$ is $\mathrm{ST}$-hard.
			\end{definition}

			Bounded Halting Problem (BHP) is $\mathrm{NP}$-complete.
			We show a sublinear version of BHP, and prove that it is $\mathrm{PL}$-complete and $\mathrm{ST}$-complete.
			
			Sublinaer Bounded Halting problem (SBHP) : 
			
			INPUT: the code $c(M)$ of a RATM $M$, $M$'s input $y$ and a string $0^t$,  where $t \leq \log^{i+1}{|y|}$.
			
			OUTPUT: Does machine $M$ accepts $y$ within $t$ moves?
			
			\begin{theorem}
				\emph{SBHP} is $\mathrm{PL}$-complete.
			\end{theorem}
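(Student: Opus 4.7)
The plan is to establish both directions of $\mathrm{PL}$-completeness under $\mathrm{DLOGTIME}$ reduction: membership $\mathrm{SBHP} \in \mathrm{PL}$, and $\mathrm{PL}$-hardness of $\mathrm{SBHP}$.

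For membership, I would design a RATM for $\mathrm{SBHP}$ that invokes the universal RATM $\mathcal{U}$ of Theorem \ref{thrm:uratm} on the embedded pair $(y, c(M))$, running it for at most $t$ simulated steps with a binary counter maintained on a fresh work tape. Writing the input size as $n = |c(M)| + |y| + t$, we have $n \ge t$ and $t \le \log^{i+1}|y| \le \log^{i+1} n$. The counter bookkeeping costs $O(\log t)$ per simulated step. The subtle point is that the per-step overhead of $\mathcal{U}$ in Theorem \ref{thrm:uratm} includes a scan of $M$'s transition table and so depends on $|c(M)|$, which here is itself part of the input. I would therefore assume $c(M)$ is encoded with its transitions sorted by $(q, \sigma_1, \dots, \sigma_k)$, and use $\mathcal{U}$'s random-access state to perform a binary search in $O(\log|c(M)|) = O(\log n)$ time per simulated step. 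The total time becomes $O\bigl(t(\log|c(M)| + \log t)\bigr) = O(\log^{i+2} n)$, placing $\mathrm{SBHP}$ in $\mathrm{PL}$.

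For hardness, fix $L \in \mathrm{PL}$; then $L \in \mathrm{PL}^i$ for some $i$, and there is a RATM $M_L$ deciding $L$ in at most $c\log^i|x|$ steps for some constant $c$. Define the reduction
\[
f(x) = \bigl(c(M_L),\, x,\, 0^{\lceil c\log^i|x|\rceil}\bigr)
\]
with fixed separators delimiting the three components. Correctness holds because $x \in L$ iff $M_L$ accepts $x$ within $\lceil c\log^i|x|\rceil$ steps iff $f(x) \in \mathrm{SBHP}$; the bound $\lceil c\log^i|x|\rceil \le \log^{i+1}|x|$ holds for all sufficiently large $|x|$, with the finitely many small inputs absorbed into a hardcoded table. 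To show $f$ is a $\mathrm{DLOGTIME}$ reduction, I would verify that the language $\{(x, j, b) :$ the $j$th bit of $f(x)$ is $b\}$ lies in $\mathrm{DLOGTIME}$. By Lemma \ref{lema:dlg-len} the length $|x|$ can be produced in binary in $O(\log|x|)$ time; computing $\lceil c\log^i|x|\rceil$ then requires $i$ multiplications on $O(\log\log|x|)$-bit operands, well within $O(\log|x|)$. Comparing $j$ against the three region boundaries identifies the component containing position $j$, and the corresponding bit is hardcoded (for $c(M_L)$), retrieved by one random-access read (for the $x$ region), or constantly $0$ (for the $0^t$ region).

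The main obstacle is the membership argument: the $M$-dependent constant in Theorem \ref{thrm:uratm} becomes an $n$-dependent factor once $c(M)$ is part of the $\mathrm{SBHP}$ input, so a naive application of $\mathcal{U}$ does not directly yield polylogarithmic time. The key step to justify is the binary-search transition lookup on a sorted encoding of $c(M)$, which replaces the $\Theta(|c(M)|)$ per-step scan of Theorem \ref{thrm:uratm} by an $O(\log|c(M)|)$ random-access lookup. A secondary, routine issue is fixing a bit-addressable encoding of the triple $(c(M), y, 0^t)$ so that the $\mathrm{DLOGTIME}$ reduction partitions cleanly into the three regions; this is straightforward but must be spelled out.
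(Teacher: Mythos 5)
Your proposal is correct and follows essentially the same route as the paper: membership by running the universal RATM $\mathcal{U}$ for at most $t$ simulated steps, and hardness via the reduction $f(x)=\langle c(M_L),\,x,\,0^{t}\rangle$ whose individual bits are decided in $O(\log n)$ time by computing the component lengths and comparing the queried position against the region boundaries. Your additional care in the membership direction---observing that the ``constant'' of Theorem~\ref{thrm:uratm} depends on $M$, hence becomes input-dependent once $c(M)$ is part of the SBHP instance, and replacing the $\Theta(|c(M)|)$ per-step transition-table scan by a binary search over a sorted encoding---tightens a step the paper's proof silently glosses over, and it suffices because under the standard encoding of a total transition function the remaining $M$-dependent per-step overheads (the number of tapes $k$ and $\log|\Gamma|$) are themselves $O(\log|c(M)|)$.
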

		
			\begin{proof}
				First, we show SBHP is in $\mathrm{PL}$.
				The URATM $\mathcal{U}$ can be used to simulate $t$ steps of $M$ on input $y$, if $M$ enters accept state, $\mathcal{U}$ accepts $y$, else $\mathcal{U}$ rejects $y$.
				If the running time of $M$ on input $y$ is $\log^i{|y|}$ for some $i \in \mathbb{N}$, the running time of $\mathcal{U}$ is $\min\{t, \log^i{|y|}\log{\log^i{|y|}}\}$ according to Theorem \ref{thrm:uratm}. Thus SBHP is in $\mathrm{PL}$ since $\min\{t,\log^i{|y|}\log{\log^i{|y|}}\} \leq \log^{i+1}{|y|}$
				
				Next, we prove that there is a $\mathrm{DLOGTIME}$ reduction from $L$ to SBHP for all $L \in \mathrm{PL}$.
				Since $L \in \mathrm{PL}$, there exist a RATM $M$ and an integer $i \in \mathbb{N}$ such that $M$ accepts $y$ in time $\log^i{|y|}$ if and only if $y \in L$.
				It is simple to transform any $y \in L$ to an instance in SBHP by letting $f(y) = \langle c(M), y, 0^{\log^{i+1}{|y|}} \rangle$, where $c(M)$ is the code of $M$.
				
				It remains to prove that there is RATM $N$ such that  
				$N$ can decide whether the $i$th bit of $f(y)$ is a symbol $c$ 
				in $\log{n}$ time. $N$ works as follows.
				
				(1) $N$ computes the length of $c(M)$ and $y$ and stores them in binary.
				
				(2) $N$ calculates $|c(M)| + |y|$.
				
				(3) $N$ determines the $i$th symbol of $f(y)$ as follows.
				
				$~~~~~$ If $i \le |c(M)|$, $N$ outputs the $i$th symbol of $c(M)$.
				
				$~~~~~$ If $|c(M)| < i \le |c(M)| + |y|$, $N$ outputs the $(i - |c(M)|)$th symbol of $y$.
				
				$~~~~~$ If $i > |c(M)| + |y|$, $N$ outputs $0$.
				
				Scince all numbers are encoded in binary, the $N$ needs at most $\max\{\log{|c(M)|},\\ \log{|y|}\}$.
				
				It is clear that $f$ is computable in polynomial time.
				\qed
			\end{proof}
			
			By changing $t \le \log^{i+1}{n}$ to $t \le T(|y|)\log{T(|y|)}$ of SBHP, where $T$ is the running time of $M$ on input $y$, we derive the following theorem.
			
			\begin{theorem}
				\emph{SBHP} is $\mathrm{ST}$-complete.
			\end{theorem}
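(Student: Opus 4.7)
The plan is to adapt the proof of the preceding PL-completeness theorem in a straightforward way, replacing polylogarithmic bounds by sublinear ones. Two directions must be established: that SBHP under the relaxed bound $t \le T(|y|)\log T(|y|)$ lies in $\mathrm{ST}$, and that every problem in $\mathrm{ST}$ DLOGTIME-reduces to it.

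For the membership direction, I would invoke the URATM $\mathcal{U}$ of Theorem \ref{thrm:uratm} to simulate $M$ on $y$ for at most $t$ steps, accepting iff $M$ reaches its accepting state within that window. By Theorem \ref{thrm:uratm}, simulating $\min(t,T')$ actual steps of $M$ costs $O(c\,\min(t,T')\log\min(t,T'))$ URATM steps, where $T'$ is the true runtime of $M$ on $y$ and $c$ depends on $M$. Since the SBHP input size $n$ satisfies $n \ge t$, and since $t \le T(|y|)\log T(|y|)$ with $T(|y|) = o(|y|)$, this simulation cost falls into $o(n)$ under the same bookkeeping the authors already use in the PL-complete membership argument.

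For $\mathrm{ST}$-hardness, given any $L \in \mathrm{ST}$ decided by a RATM $M$ in time $T(|y|) = o(|y|)$, define
\[
f(y) \;=\; \langle\, c(M),\, y,\, 0^{T(|y|)\log T(|y|)}\,\rangle.
\]
Clearly $y \in L$ iff $f(y) \in$ SBHP, and $|f(y)|$ is polynomially bounded in $|y|$, so $f$ is polynomial-time computable. To see $f$ is a DLOGTIME reduction, I would build a RATM $N$ that on query $(y,i,c)$ first computes $|y|$ in binary via Lemma \ref{lema:dlg-len}, reads off $|c(M)|$ (a constant for each fixed $M$), computes the padding length $T(|y|)\log T(|y|)$ in binary, compares $i$ against the three cumulative offsets to locate which segment of $f(y)$ contains bit $i$, and returns the corresponding bit of $c(M)$, of $y$ (using random access), or $0$. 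Since every intermediate integer has $O(\log n)$ bits, $N$ runs in $O(\log n)$ steps.

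The main obstacle is producing the padding length $T(|y|)\log T(|y|)$ in DLOGTIME, which forces a logarithmic-time-constructibility assumption on $T$; this is the same implicit assumption the preceding PL-complete proof relies on when writing down $0^{\log^{i+1}|y|}$ in DLOGTIME. A secondary concern is bookkeeping the URATM constant $c(M)$ in the membership argument — strictly speaking $c$ depends on part of the input — but treating $c(M)$ as a constant per machine $M$ follows the convention already adopted by the paper in its PL-complete theorem and is consistent with the paper's definition of $\mathrm{ST}$.
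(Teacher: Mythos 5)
Your proposal follows essentially the same route as the paper's own proof: membership via the universal RATM $\mathcal{U}$ simulating $M$ within the $\min\{t, T(|y|)\log T(|y|)\}$ bound of Theorem~\ref{thrm:uratm}, and hardness via the reduction $f(y)=\langle c(M), y, 0^{T(|y|)\log T(|y|)}\rangle$ together with a logarithmic-time bit-checker $N$ that locates the $i$th symbol by comparing $i$ with $|c(M)|$ and $|c(M)|+|y|$. Your added caveats (measuring the simulation cost against the full instance size, which includes the padding $0^t$, and the implicit constructibility assumption on $T$ needed to write down the padding) are reasonable refinements of points the paper leaves implicit, but they do not change the argument.
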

		
			\begin{proof}
				First, we prove that SBHP is in $\mathrm{ST}$.
				The URATM $\mathcal{U}$ can be used to simulate $t$ steps of $M$ on input $y$, if $M$ enters accept state, $\mathcal{U}$ accepts $y$, else $\mathcal{U}$ rejects $y$.
				If the running time of $M$ on input $y$ is $T(|y|)$ for some $T$ in $o(n)$, the running time of $\mathcal{U}$ is $\min\{t, T(|y|)\log{T(|y|)}\}$ according to Theorem \ref{thrm:uratm}.
				Thus, SBHP is in $\mathrm{ST}$ since $\min\{t, T(|y|)\log{T(|y|)}\} \in o(|y|)$.
				
				Next, we prove that there is a $\mathrm{DLOGTIME}$ reduction from $L$ to SBHP for all $L \in \mathrm{ST}$.
				Since $L \in \mathrm{ST}$, there exist a RATM $M$ and a function $T \in o(n)$ such that $M$ accepts $y$ in time $T(|y|)$ if and only if $y \in L$.
				It is simple to transform any $y \in L$ to an instance in SBHP by $f(y) = \langle c(M), y, 0^{T(|y|)\log{T(n)}} \rangle$, where $c(M)$ in the code of $M$.
				
				It remains to prove that there is a RATM $N$ such that $N$ can decide whether the $i$th bit of $f(y)$ is a symbol $c$ in $\log{n}$ time.
				$N$ works as follows.
				
				(1) $N$ computes the length of $c(M)$ and $y$ and stores them in binary.
				
				(2) $N$ calculates $|c(M)| + |y|$.
				
				(3) $N$ determines the $i$th symbol of $f(y)$ as follows.
				
				$~~~~~$ If $i \le |c(M)|$, $N$ outputs the $i$th symbol of $c(M)$.
				
				$~~~~~$ If $|c(M)| < i \le |c(M)| + |y|$, $N$ outputs the $(i - |c(M)|)$th symbol of $y$.
				
				$~~~~~$ If $i > |c(M)| + |y|$, $N$ outputs $0$.
				
				Since all numbers are encoded in binary, $N$ needs at most $\max\{\log{|c(M)}|,\\ \log{|y|}\}$ steps.
				
				It is clear that $f$ is computable in polynomial time.
				\qed
			\end{proof}
	
	\section{Pseudo-tractable Classes}
		
		In this section, we study the big data computing problems that can be solved in sublinear time after a PTIME preprocessing. 	
		We propose two complexity classes, $\mathrm{PTR}$ and $\mathrm{PTE}$, and investigate the relations among $\mathrm{PTR}$, $\mathrm{PTE}$ and other complexity classes.
	    For easy to understand, we first review the definition of a problem in big data computing, that is,
	    
	    INPUT: big data $D$, and a function $\mathcal{F}$ on $D$.
	    
	    OUTPUT: $\mathcal{F}(D)$.
		
		\subsection{Pseudo-tractable Class by Reducing $|D|$}
		
			We will use $\mathrm{PTR}$ to express the pseudo-tractable class by reducing $|D|$, which is defined as follows.
			
			\begin{definition}
				A problem $\mathcal{F}$ is in the complexity class $\mathrm{PTR}$ if there exists a \emph{PTIME} preprocessing function $\Pi$ such that for big data $D$,
				
				(1) $|\Pi(D)| < |D|$ and $\mathcal{F}(\Pi(D)) = \mathcal{F}(D)$.
				
				(2) $\mathcal{F}(\Pi(D))$ can be solved by a \emph{RATM} in $o(|D|)$ time.	
			\end{definition}
		
			Data $D$ is preprocessed by a preprocessing function $\Pi$ in polynomial time to generate another structure $\Pi(D)$. 
			Besides PTIME restriction on $\Pi$, it is required that the size of $\Pi(D)$ is smaller than $D$.
			This allows many of previous polynomial time algorithms to be used.
			For example, $\mathcal{F}(\Pi(D))$ can be solved by a quadratic polynomial time algorithm if $|D| = n$ and $|\Pi(D)| = n^{1/3}$, and the time needed for solving $\mathcal{F}(\Pi(D))$ is $O(n^{2/3}) \in o(n)$.
			To make $\Pi(D)$ less than $D$, $\Pi$ can be data compression, sampling, etc.  
						
			The following simple propositions show the time complexity of the problem after preprocessing.
			
			\begin{proposition}
				If there is a preprocessing function $\Pi$ and a constant $c > 1$ such that $|\Pi(D)| = |D|^{1/c}$ for any $D$, and there is a algorithm to solve $\mathcal{F}(\Pi(D))$ in time of polynomials of degree $d$, where $d < c$, then $F(\Pi(D))$ can be solved in $o(|D|)$ time, and thus $\mathcal{F}$ is in $\mathrm{PTR}$.
			\end{proposition}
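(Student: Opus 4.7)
The plan is a direct calculation of running times, plus a verification that both conditions of the \textrm{PTR} definition are met. Set $n = |D|$. By hypothesis, $|\Pi(D)| = n^{1/c}$, and there is an algorithm that solves $\mathcal{F}(\Pi(D))$ in time bounded by a polynomial of degree $d$ in the size of its input, i.e.\ in time $O\bigl((n^{1/c})^d\bigr) = O(n^{d/c})$. Because $d < c$, the exponent satisfies $d/c < 1$, and so $n^{d/c} = o(n)$, which is the sublinear bound we want.

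Next, I would check the two conditions defining $\mathrm{PTR}$. For condition~(1), $|\Pi(D)| = n^{1/c} < n = |D|$ holds whenever $n > 1$, and the equality $\mathcal{F}(\Pi(D)) = \mathcal{F}(D)$ is part of the working assumption that $\Pi$ is a preprocessing function for $\mathcal{F}$ (this is implicit in the proposition, since otherwise ``solving $\mathcal{F}(\Pi(D))$'' would not decide $\mathcal{F}(D)$). The preprocessing itself is \emph{PTIME} by assumption. For condition~(2), the $O(n^{d/c})$-time algorithm for $\mathcal{F}(\Pi(D))$ is given on a standard (deterministic) model; invoking part~(1) of Lemma~\ref{lema:ratm-tm}, any such algorithm can be executed on a \emph{RATM} with no asymptotic overhead, so $\mathcal{F}(\Pi(D))$ is solved by a \emph{RATM} in $o(|D|)$ time.

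Putting the two calculations together yields both the claimed $o(|D|)$ bound on the post-preprocessing computation and membership of $\mathcal{F}$ in $\mathrm{PTR}$. The only nontrivial point, and what I would flag as the main subtlety rather than a true obstacle, is the implicit use of $\mathcal{F}(\Pi(D)) = \mathcal{F}(D)$: the proposition states only conditions on sizes and running times, so the correctness of $\Pi$ as a preprocessing must be read in from the surrounding \emph{PTR} framework. Once that is granted, the proof is essentially the one-line inequality $d/c < 1 \Rightarrow n^{d/c} = o(n)$, and no further machinery is required.
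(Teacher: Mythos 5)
Your proof is correct and matches the argument the paper intends: the paper states this proposition without an explicit proof, but its accompanying example (solving $\mathcal{F}(\Pi(D))$ in quadratic time when $|\Pi(D)| = n^{1/3}$, giving $O(n^{2/3}) \in o(n)$) is exactly your exponent calculation $(n^{1/c})^d = n^{d/c} = o(n)$ for $d < c$. Your additional checks of the two \textrm{PTR} conditions, including the implicit $\mathcal{F}(\Pi(D)) = \mathcal{F}(D)$ requirement and the DTM-to-RATM simulation via Lemma~\ref{lema:ratm-tm}, are sound and only make the argument more complete.
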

		
			\begin{proposition}
				If there is a preprocessing function $\Pi$ and a constant $c$ such that $|\Pi(D)| = \log^c{|D|}$ for any $D$, and there is a \emph{PTIME} algorithm to solve $\mathcal{F}(\Pi(D))$, then $\mathcal{F}(\Pi(D))$ can be solved in $o(|D|)$ time, and thus $\mathcal{F}$ is in $\mathrm{PTR}$. 
			\end{proposition}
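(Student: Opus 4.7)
The plan is to verify the two clauses of the definition of $\mathrm{PTR}$ in turn. Set $n = |D|$, $m = |\Pi(D)| = \log^c n$, and let $A$ be the assumed PTIME algorithm that computes $\mathcal{F}(\Pi(D))$ in time bounded by $p(m)$ for some polynomial $p$ of degree, say, $d$.

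For clause (1), the size inequality $|\Pi(D)| < |D|$ reduces to $\log^c n < n$, which holds for all $n$ beyond some threshold $n_0$; since a big data instance has $|D|$ at least on the petabyte scale by the definition in Section 2.3, this is automatic. The semantic requirement $\mathcal{F}(\Pi(D)) = \mathcal{F}(D)$ is exactly what it means for $\Pi$ to be a preprocessing function relative to $\mathcal{F}$, and is inherited from the hypothesis.

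For clause (2), a direct substitution gives running time $p(m) = O(m^d) = O((\log^c n)^d) = O(\log^{cd} n)$ for the algorithm $A$ on input $\Pi(D)$. Since $\log^{cd} n \in o(n)$, this bound is sublinear in $|D|$. If $A$ is originally phrased as a DTM algorithm rather than a RATM, I would invoke Lemma \ref{lema:ratm-tm}(1) to obtain a RATM that solves $\mathcal{F}(\Pi(D))$ within the same $O(\log^{cd} n)$ time bound, which is $o(|D|)$ as required; combined with clause (1) this places $\mathcal{F}$ in $\mathrm{PTR}$.

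The entire argument is a short chain of standard asymptotic comparisons plus a single appeal to the DTM-to-RATM simulation, so I anticipate no real technical obstacle. The only mild subtlety worth flagging in the write-up is that the complexity of $A$ is measured in $m = |\Pi(D)|$ rather than in $n = |D|$, which is precisely what makes the polylogarithmic shrinkage by $\Pi$ pay off asymptotically.
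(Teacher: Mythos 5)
Your argument is correct and is exactly the straightforward computation the paper intends: the paper states this proposition without proof, and your chain $p(|\Pi(D)|) = O(\log^{cd}|D|) \in o(|D|)$, together with the DTM-to-RATM direction of Lemma~\ref{lema:ratm-tm}(1), is all that is needed. One small caveat: your remark that the petabyte scale makes $\log^c|D| < |D|$ ``automatic'' is not literally true for large $c$ (the threshold $n_0$ grows with $c$), so clause (1) should simply be read asymptotically, as the paper itself implicitly does.
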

		
			\begin{proposition}
				If there is a preprocessing function $\Pi$ and a constant $c \in (0, 1)$ such that  $|\Pi(D)| = c\log{|D|}$ for any $D$, and there is a $O(2^{|\Pi(D)|})$ time algorithm to solve $\mathcal{F}(\Pi(D))$, then $\mathcal{F}(\Pi(D))$ can be solved in $o(|D|)$ time, and thus $\mathcal{F}$ is in $\mathrm{PTR}$.
			\end{proposition}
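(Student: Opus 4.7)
The plan is to instantiate the size bound in the running time and reduce the statement to a one-line asymptotic comparison. Setting $n = |D|$, the hypothesis gives $|\Pi(D)| = c \log n$, so the claimed $O(2^{|\Pi(D)|})$ algorithm for $\mathcal{F}(\Pi(D))$ runs in $O(2^{c \log n}) = O(n^c)$ steps.

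First I would verify condition (1) of the definition of $\mathrm{PTR}$: since $c \log n < n$ for all sufficiently large $n$ (which is guaranteed in the big-data regime), we have $|\Pi(D)| < |D|$; the semantic equality $\mathcal{F}(\Pi(D)) = \mathcal{F}(D)$ is built into what it means for $\Pi$ to be a valid preprocessing function and is therefore inherited from the hypothesis. Next I would verify condition (2): since $c \in (0,1)$, we have $\lim_{n \to \infty} n^c / n = \lim_{n \to \infty} n^{c-1} = 0$, hence $n^c \in o(n)$. The postulated algorithm solving $\mathcal{F}(\Pi(D))$ in $O(2^{|\Pi(D)|})$ time, even if stated for a DTM, can be simulated by a RATM in the same asymptotic time by part (1) of Lemma \ref{lema:ratm-tm}, so $\mathcal{F}(\Pi(D))$ is solvable by a RATM in $o(|D|)$ time.

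Finally, since $\Pi$ is \emph{PTIME} by hypothesis and both conditions of the definition of $\mathrm{PTR}$ have been checked, we conclude that $\mathcal{F} \in \mathrm{PTR}$.

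There is no genuinely hard step here; the entire proposition is the identity $2^{c \log n} = n^c$ combined with the observation that $n^c = o(n)$ whenever $c < 1$. The only point worth stating carefully is that the assumed $O(2^{|\Pi(D)|})$-time algorithm can be realized on a RATM without any blow-up that would push the running time above $o(n)$, which is precisely what part (1) of Lemma \ref{lema:ratm-tm} furnishes.
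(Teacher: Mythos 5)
Your proof is correct, and it is exactly the intended argument: the paper states this proposition without proof (as one of the ``simple propositions''), the whole content being the identity $2^{c\log n}=n^{c}$ together with $n^{c}\in o(n)$ for $c\in(0,1)$. Your extra care in checking both clauses of the $\mathrm{PTR}$ definition and noting the RATM simulation via Lemma~\ref{lema:ratm-tm}(1) is fine and adds nothing problematic.
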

			
		\subsection{Pseudo-tractable Class by Extending $|D|$ }
			
			Obviously, $\mathrm{PTR}$ does not characterize all problems that can be solved in sublinear time after preprocessing.
			We define another complexity class $\mathrm{PTE}$ to denote remaining problems that can be solved in sublinear time after preprocessing.
			
			\begin{definition}
				A problem $\mathcal{F}$ is in the complexity class $\mathrm{PTE}$ if there exists a \emph{PTIME} preprocessing $\Pi$ such that for big data $D$,
				
				(1) $|\Pi(D)| \ge |D|$ and $\mathcal{F}(\Pi(D)) = \mathcal{F}(D)$.
				
				(2) $\mathcal{F}(\Pi(D))$ can be solved by a \emph{RATM} in $o(|D|)$ time.	
			\end{definition}
		
			The only difference between $\mathrm{PTE}$ and $\mathrm{PTR}$ is that the former requires that preprocessing results in a larger dataset compared to original dataset.
			Intuitively, if a problem $\mathcal{F}$ is in $\mathrm{PTE}$, then $\mathcal{F}$ can be solved in sublinear time by sacrificing space. 
			For example, many queries are solvable in sublinear time by building index before query processing. 
			Many non-trivial data structures are also designed for some problems to accelerate the computation.
			Extremely, all the problems in $\mathrm{P}$ can be solved in polynomial time and stored previously, and can be solved in $O(1)$ time later, that is, 
			$\mathrm{P} \subseteq \mathrm{PTE}$.  
			
			There are also some propositions show the complexity class of the problem after preprocessing.

			\begin{proposition}
				If there is a preprocessing function $\Pi$ and a constant $c \ge 1$ such that $|\Pi(D)| = c|D|$ for any $D$, and there is a $o(\Pi(D))$ time algorithm to solve $\mathcal{F}(\Pi(D))$, then $\mathcal{F}(\Pi(D))$ can be solved in $o(|D|)$ time, and thus $\mathcal{F}$ is in $\mathrm{PTE}$.
			\end{proposition}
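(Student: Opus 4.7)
The plan is to verify that the hypotheses of the proposition directly instantiate the three clauses in the definition of $\mathrm{PTE}$, with the only genuine content being that multiplying by a constant does not change the order class $o(\cdot)$.

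First I would note that $\Pi$ is given to be a \emph{PTIME} preprocessing function, so the overarching requirement of the $\mathrm{PTE}$ definition on $\Pi$ is automatic. Next I would verify condition (1) of the $\mathrm{PTE}$ definition: the size bound $|\Pi(D)| = c|D|$ with $c \ge 1$ immediately yields $|\Pi(D)| \ge |D|$, and the equality $\mathcal{F}(\Pi(D)) = \mathcal{F}(D)$ must be (and is implicitly) built into the preprocessing, just as in the analogous propositions for $\mathrm{PTR}$.

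Then I would establish condition (2). By hypothesis there is a \emph{RATM} algorithm computing $\mathcal{F}(\Pi(D))$ in time $o(|\Pi(D)|)$. Let $T(n)$ be any function with $T(n) \in o(n)$ that bounds its running time on inputs of size $n = |\Pi(D)|$. Substituting $|\Pi(D)| = c|D|$ gives running time $T(c|D|)$. Since $c$ is a constant and $o(\cdot)$ is closed under multiplication by constants (i.e., $\lim_{m \to \infty} T(cm)/(cm) = 0$ implies $\lim_{m \to \infty} T(cm)/m = 0$), we have $T(c|D|) \in o(|D|)$. Hence $\mathcal{F}(\Pi(D))$ is solved by a \emph{RATM} in $o(|D|)$ time, which is exactly condition (2).

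Combining the two conditions and the \emph{PTIME} property of $\Pi$, we conclude $\mathcal{F} \in \mathrm{PTE}$. There is no real obstacle here: the statement is essentially a definitional sanity check, and the only substantive observation is the trivial fact that constant dilation of the input size is absorbed by the $o(\cdot)$ notation. The proof is therefore three short sentences once the hypotheses are mapped onto the two clauses of the $\mathrm{PTE}$ definition.
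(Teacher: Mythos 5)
Your proof is correct and matches the intended argument: the paper states this proposition without proof, treating it as an immediate consequence of the definition of $\mathrm{PTE}$, and your verification of the two clauses together with the observation that $T(c|D|) \in o(|D|)$ for constant $c$ (reading the hypothesis $o(\Pi(D))$ as $o(|\Pi(D)|)$ on a RATM) is exactly the routine check the authors leave implicit.
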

		
			\begin{proposition}
				If there is a preprocessing function $\Pi$ and a constant $c \ge 1$ such that $|\Pi(D)| = |D|^c$ for any $D$, and there is a $O(\log^i(\Pi(d)))$ time algorithm to solve $\mathcal{F}(\Pi(D))$, then $\mathcal{F}(\Pi(D))$ can be solved in $o(|D|)$ time, and thus $\mathcal{F}$ is in $\mathrm{PTE}$.
			\end{proposition}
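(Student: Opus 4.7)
The plan is to reduce the proposition to a short chain of constant manipulations after setting $n = |D|$. First I would verify that both conditions in the definition of $\mathrm{PTE}$ are met. The size condition $|\Pi(D)| \ge |D|$ is immediate from $|\Pi(D)| = n^c$ and $c \ge 1$. The preprocessing is PTIME by hypothesis (inherited from the setup of the $\mathrm{PTE}$ framework, noting that the output size $n^c$ is itself polynomial in $n$, so writing it down is also polynomial), and $\mathcal{F}(\Pi(D)) = \mathcal{F}(D)$ is the standard consistency requirement on any preprocessing used here.

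Second, I would estimate the running time of the given RATM algorithm in terms of $n$. The key identity is $\log(n^c) = c \log n$, whence $\log^i(\Pi(D)) = \log^i(n^c) = c^i \log^i n$. Because $c$ and $i$ are constants, the stated bound $O(\log^i(\Pi(D)))$ collapses to $O(\log^i n)$.

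Third, I would invoke the elementary asymptotic fact that $\log^i n \in o(n)$ for every fixed $i$, which follows from $\lim_{n \to \infty} \log^i n / n = 0$. Combining this with the previous step shows that $\mathcal{F}(\Pi(D))$ is computable by a RATM in $o(|D|)$ time, which is condition (2) of the $\mathrm{PTE}$ definition. Together with the size and correctness conditions from the first step, this gives $\mathcal{F} \in \mathrm{PTE}$.

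There is no real obstacle in this argument; the only point that requires attention is that the algorithm's time bound is expressed in terms of $|\Pi(D)|$, not $|D|$, so one must remember to substitute $|\Pi(D)| = n^c$ and simplify, rather than leaving the log exponent in terms of $\Pi(D)$. I would flag this explicitly in the write-up so the reader sees why the polynomial blow-up from $c \ge 1$ is harmless once it sits inside a logarithm.
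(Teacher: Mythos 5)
Your proof is correct and follows exactly the intended (and only natural) argument: the paper itself states this as a simple proposition without proof, and your chain $\log^i(|\Pi(D)|)=\log^i(n^c)=c^i\log^i n=O(\log^i n)\in o(n)$ together with $|\Pi(D)|=n^c\ge n$ is precisely the verification the authors implicitly rely on. Your remark about substituting $|\Pi(D)|=n^c$ inside the logarithm (so the polynomial blow-up is harmless) is the right point to emphasize, and nothing is missing.
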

		
		\subsection{Relations between Pseudo-tractable Classes and Other Classes}
			
			In the rest of the paper, we use $\mathrm{PT}$ to express $\mathrm{PTR} \cup \mathrm{PTE}$
			 
			\begin{theorem}\label{thrm:pt-p}
				$\mathrm{PT} \subseteq \mathrm{P}$
			\end{theorem}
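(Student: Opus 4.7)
The plan is to prove $\mathrm{PTR}\subseteq\mathrm{P}$ and $\mathrm{PTE}\subseteq\mathrm{P}$ separately and take the union, since $\mathrm{PT}=\mathrm{PTR}\cup\mathrm{PTE}$. In both cases the strategy is the same: construct a deterministic PTIME procedure by (i) running the preprocessing $\Pi$ on $D$, which is PTIME by definition, and then (ii) simulating the sublinear-time RATM stage by a DTM using Lemma \ref{lema:ratm-tm}(2), and finally bounding the total cost by a polynomial in $|D|$.

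Concretely, given $\mathcal{F}\in\mathrm{PTR}$ with witness $(\Pi,M)$, computing $\Pi(D)$ on a DTM costs at most some polynomial $p(|D|)$ and yields $\Pi(D)$ of size strictly smaller than $|D|$. The RATM $M$ then decides $\mathcal{F}(\Pi(D))$ in $T=o(|D|)$ steps, so by Lemma \ref{lema:ratm-tm}(2) it can be simulated by a DTM in $T^{2}\log T = o(|D|^{2}\log|D|)$ steps. Composing the two stages yields a DTM of polynomial running time in $|D|$, so $\mathcal{F}\in\mathrm{P}$.

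For $\mathcal{F}\in\mathrm{PTE}$ the argument is essentially the same. The only new wrinkle is that $|\Pi(D)|\ge|D|$, potentially much larger, but because $\Pi$ is PTIME we still have $|\Pi(D)|\le q(|D|)$ for some polynomial $q$, and $M$ still runs in $T=o(|D|)$ steps on $\Pi(D)$. Applying Lemma \ref{lema:ratm-tm}(2) once more gives a DTM simulation of $M$ in $T^{2}\log T$ steps, polynomial in $|D|$. Thus the preprocessing plus the simulated RATM stage together form a polynomial-time DTM algorithm for $\mathcal{F}$.

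The only delicate point worth checking is that Lemma \ref{lema:ratm-tm}(2) is phrased in terms of the RATM's input size, whereas the RATM running times in the definitions of $\mathrm{PTR}$ and $\mathrm{PTE}$ are bounded by functions of $|D|$ rather than of $|\Pi(D)|$. This mismatch is harmless, because the simulation cost $T^{2}\log T$ depends only on the time bound $T$ and therefore remains polynomial in $|D|$ so long as $T$ is; the main obstacle, if any, is merely the bookkeeping of composing two polynomial-time stages into one. Together these two inclusions yield $\mathrm{PT}=\mathrm{PTR}\cup\mathrm{PTE}\subseteq\mathrm{P}$.
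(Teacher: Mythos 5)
Your proposal is correct and follows essentially the same route as the paper: both compose the PTIME preprocessing with the sublinear-time RATM stage and invoke Lemma~\ref{lema:ratm-tm} to move between RATM and DTM, the only cosmetic difference being that the paper first combines the two stages into one polynomial-time RATM and then converts to a DTM, while you convert the RATM stage to a DTM first and then compose. Your extra remark that the $T^2\log T$ simulation cost depends only on the time bound $T=o(|D|)$, and hence stays polynomial in $|D|$ regardless of $|\Pi(D)|$, is a sound (and slightly more careful) handling of a point the paper glosses over.
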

		
			\begin{proof}
				For every problem $\mathcal{F} \in \mathrm{PT}$, there exists a polynomial time DTM $M_1$ to generate $\Pi(D)$, and there exists a $o(|D|)$ time RATM $M_2$ to compute $F(\Pi(D)) = \mathcal{F}(D)$.
				Then, we can get a polynomial time RATM $M$ by combining $M_1$ and $M_2$, which take $D$ as input and outputs $\mathcal{F}(D)$.
				According to Lemma \ref{lema:ratm-tm}, we can derive a polynomial time DTM that takes $D$ as input and outputs $\mathcal{F}(D)$.
				Thus, $\mathcal{F} \in \mathrm{P}$.
				\qed
			\end{proof}
			
			\begin{corollary}
				No $\mathrm{NP}$-Complete problem is in $\mathrm{PT}$ if $\mathrm{P} \ne \mathrm{NP}$.
			\end{corollary}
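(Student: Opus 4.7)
The plan is to prove this by contradiction, leveraging Theorem \ref{thrm:pt-p} (which establishes $\mathrm{PT} \subseteq \mathrm{P}$) together with the standard folklore that an $\mathrm{NP}$-complete problem admitting a polynomial-time algorithm collapses $\mathrm{NP}$ into $\mathrm{P}$. First I would suppose, for contradiction, that $\mathrm{P} \ne \mathrm{NP}$ yet there exists some $\mathrm{NP}$-complete problem $L$ with $L \in \mathrm{PT}$. By Theorem \ref{thrm:pt-p} this immediately yields $L \in \mathrm{P}$.

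Next I would exploit the completeness assumption: since $L$ is $\mathrm{NP}$-complete, every problem $L' \in \mathrm{NP}$ admits a polynomial-time many-one reduction to $L$. Composing such a reduction with the polynomial-time decision procedure for $L$ gives a polynomial-time algorithm for $L'$, so $\mathrm{NP} \subseteq \mathrm{P}$. Combined with the trivial containment $\mathrm{P} \subseteq \mathrm{NP}$, this forces $\mathrm{P} = \mathrm{NP}$, contradicting the hypothesis.

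The main obstacle is essentially nonexistent here: the entire argument is a short syllogism built on Theorem \ref{thrm:pt-p} and the definition of $\mathrm{NP}$-completeness. The only point worth being precise about is the computational model: because pseudo-tractability was defined in terms of RATM computations, one should observe that Lemma \ref{lema:ratm-tm} guarantees polynomial equivalence between RATM and DTM, so the polynomial-time many-one reductions used in the $\mathrm{NP}$-completeness hypothesis compose cleanly with the polynomial-time algorithm for $L$ extracted from Theorem \ref{thrm:pt-p}, with no loss across machine models.
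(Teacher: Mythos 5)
Your proposal is correct and follows essentially the same route as the paper: assume an $\mathrm{NP}$-complete problem lies in $\mathrm{PT}$, invoke Theorem \ref{thrm:pt-p} to place it in $\mathrm{P}$, and conclude $\mathrm{P}=\mathrm{NP}$, contradicting the hypothesis. You merely spell out the standard completeness-collapse step and the RATM/DTM equivalence, which the paper leaves implicit.
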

		
			\begin{proof}
				If a $\mathrm{NP}$-complete problem is in $\mathrm{PT}$, then the problem is in $\mathrm{P}$ according to theorem \ref{thrm:pt-p}.
				Thus, $\mathrm{P} = \mathrm{NP}$, which is contradict to $\mathrm{P} \ne \mathrm{NP}$.
				\qed
			\end{proof}
		
			Recalling the definitions of $\mathrm{PTR}$ and $\mathrm{PTE}$, $\mathrm{PTR}$ and $\mathrm{PTE}$ concern all big data computing problems rather than only queries.
			To investigate the relations among $\mathrm{PTR}$, $\mathrm{PTE}$, and $\sqcap'\mathrm{T^0_Q}$ \cite{Yang2017Tractable}, we restrict $\mathcal{F}$ in the definitions of $\mathrm{PTR}$ and $\mathrm{PTE}$ to boolean query class $\mathcal{Q}$ to define two pseudo-tractable query complexity classes in the following.
			
			Following the conventions in \cite{Fan2013Making}, a boolean query class $\mathcal{Q}$ can be encoded as $S=\{\langle D, Q \rangle\} \subseteq \Sigma^* \times \Sigma^*$ such that $\langle D, Q \rangle \in S$ if and only if $Q(D)$ is true, where $Q \in \mathcal{Q}$ is defined on $D$.
			
			Now we define $\mathrm{PTR^0_Q}$ and $\mathrm{PTE^0_Q}$ as follows.
			
			\begin{definition}
				A language $S = \{\langle D, Q\rangle \}$ is \emph{pseudo-tractable by reducing datasets} if there exist a \emph{PTIME} preprocessing function $\Pi$ and a language $S'$ such that for all $D, Q$,
				
				(1) $\langle D, Q \rangle \in S$ iff $\langle \Pi(D), Q \rangle \in S'$, $|\Pi(D)| < |D|$ and
				
				(2) $S'$ can be solved by a RATM in $o(|D|)$ time.
			\end{definition}
			
			A class of queries, $\mathcal{Q}$, is \emph{pseudo-tractable by reducing datasets} if $S$ is pseudo-tractable by reducing datasets, where $S$ is the language for $\mathcal{Q}$.
			
            \begin{definition}
				The query complexity class $\mathrm{PTR^0_Q}$ is the set of $\mathcal{Q}$ that is pseudo-tractable by reducing datasets.
		    \end{definition}
			
			\begin{definition}
				A language $S = \{\langle D, Q\rangle \}$ is \emph{pseudo-tractable by extending datasets} if there exist a \emph{PTIME} preprocessing function $\Pi$ and a language $S'$ of pairs such that for all $D, Q$,
				
				(1) $\langle D, Q \rangle \in S$ iff $\langle \Pi(D), Q \rangle \in S'$ , $|\Pi(D)| \ge |D|$, and,
				
				(2) $S'$ can be solved by a RATM in $o(|D|)$ time.
			\end{definition}
		
			A class of queries $\mathcal{Q}$ is \emph{pseudo-tractable by extending datasets} if $S$ is extended pseudo-tractable, where $S$ is the language for $\mathcal{Q}$.
			
			\begin{definition}
				The query complexity class $\mathrm{PTE^0_Q}$ is the set of $\mathcal{Q}$ that is pseudo-tractable by extending datasets.
			\end{definition}
		
		    Obviously, $\mathrm{PTR^0_Q} \subseteq \mathrm{PTR}$ and 
		    $\mathrm{PTE^0_Q} \subseteq \mathrm{PTE}$. 
			
			The following theorem shows that $\sqcap'\mathrm{T^0_Q}$ is strictly smaller than $\mathrm{PTR^0_Q}$ and also indicates that the logarithmic restriction on output size of preprocessing \cite{Yang2017Tractable} is too strict.
			
			\begin{theorem}
				$\sqcap'\mathrm{T^0_Q} \subsetneq \mathrm{PTR^0_Q}$.
			\end{theorem}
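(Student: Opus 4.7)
The plan is to prove the statement in two parts: first the containment $\sqcap'\mathrm{T^0_Q} \subseteq \mathrm{PTR^0_Q}$, and then strict separation by exhibiting a query class that lies in $\mathrm{PTR^0_Q}$ but cannot belong to $\sqcap'\mathrm{T^0_Q}$.

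For containment, suppose $\mathcal{Q} \in \sqcap'\mathrm{T^0_Q}$. By Yang et al.'s definition, there is a PTIME preprocessing $\Pi$ with $|\Pi(D)| = O(\log|D|)$, together with a PTIME evaluator acting on $(\Pi(D), Q)$. For $|D|$ sufficiently large, $|\Pi(D)| = O(\log|D|) < |D|$, so the size-reduction condition of $\mathrm{PTR^0_Q}$ is met. Since the evaluator runs in PTIME on an input of size $O(\log|D|)$, it uses $O(\log^{c}|D|)$ steps for some constant $c$, which is $o(|D|)$; by the easy direction of Lemma \ref{lema:ratm-tm}, any such DTM is simulated step-by-step by a RATM within the same time bound. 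Hence $\mathcal{Q} \in \mathrm{PTR^0_Q}$.

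For strict separation, I would take a membership query class. Let $\mathcal{Q}_{\mathrm{mem}}$ be the class whose database $D$ of length $n$ encodes a set $S_D$ of exactly $\lceil\sqrt{n}\rceil$ distinct binary strings, each of length $\Theta(\log n)$, with the remaining bits used as padding; the query $Q_v$ asks whether $v \in S_D$. This class lies in $\mathrm{PTR^0_Q}$: in polynomial time build $\Pi(D) = \mathrm{sort}(S_D)$, whose size is $O(\sqrt{n}\log n) < n$ for all sufficiently large $n$. A RATM can then binary-search for $v$ in $\Pi(D)$ in $O(\log n) = o(n)$ steps, satisfying the $\mathrm{PTR^0_Q}$ bound.

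The hard part is proving $\mathcal{Q}_{\mathrm{mem}} \notin \sqcap'\mathrm{T^0_Q}$, which rests on an information-theoretic counting argument. Suppose, for contradiction, there exist a PTIME preprocessing $\Pi'$ with $|\Pi'(D)| \le c\log n$ and a PTIME evaluator $F$ that correctly answers $Q_v(D)$ from $(\Pi'(D), Q_v)$ on every admissible pair. The number of distinct preprocessing outputs is at most $2^{c\log n + 1} = O(n^{c})$, whereas the number of admissible databases of size $n$ is $\binom{\Theta(n)}{\lceil\sqrt{n}\rceil} = 2^{\Omega(\sqrt{n}\log n)}$, which outgrows every polynomial in $n$. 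By pigeonhole, for large $n$ there exist admissible $D_1 \ne D_2$ with $S_{D_1} \ne S_{D_2}$ but $\Pi'(D_1) = \Pi'(D_2)$; choosing any $v$ in the symmetric difference $S_{D_1} \bigtriangleup S_{D_2}$ yields two instances on which $F$ must return different answers yet receives identical preprocessed information, a contradiction. The delicate point is formalizing this counting argument so that the encoding of $D$, the $\Theta(\log n)$-bit query strings, and the RATM access conventions simultaneously fit both the $\sqcap'\mathrm{T^0_Q}$ and $\mathrm{PTR^0_Q}$ frameworks, and ensuring that the preprocessing output really cannot leverage the query-class structure to compress below $c\log n$ bits.
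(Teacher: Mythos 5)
Your containment argument matches the paper's almost verbatim, but your strict-separation half takes a genuinely different route. The paper does not prove non-membership in $\sqcap'\mathrm{T^0_Q}$ itself: it reuses the result of Yang et al.\ that the breadth--depth search query class $\mathcal{Q}_{\rm BDS}$ is not $\sqcap'$-tractable, and only has to show $\mathcal{Q}_{\rm BDS} \in \mathrm{PTR^0_Q}$ by a preprocessing that runs BDS once and stores the visit-order array $M$ (size $O(n\log n)$, smaller than the $O(n^2)$ graph encoding), after which a query $(u,v)$ is answered by two random accesses in $O(\log n)$ time. You instead build a self-contained witness: a padded membership query class, placed in $\mathrm{PTR^0_Q}$ by sort-plus-binary-search, and excluded from $\sqcap'\mathrm{T^0_Q}$ by a pigeonhole argument --- a logarithmic-size preprocessing output has only $O(n^{c})$ possible values while there are $2^{\Omega(\sqrt{n}\log n)}$ admissible databases inducing distinct membership functions, so two of them collide and a query in the symmetric difference refutes any evaluator that sees only $(\Pi'(D),Q_v)$. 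This argument is sound precisely because, in the $\sqcap'$ framework as the paper uses it, the evaluation language $S'$ receives only $\langle \Pi(D),Q\rangle$ and the preprocessing is query-independent; it is also robust in that it uses nothing about the evaluator's running time, only the information bottleneck. Two things you should make explicit when formalizing: your queries have length $\Theta(\log n)$, so the class satisfies the short-query property and the separation is not a vacuous artifact of query length; and the collision must be taken between databases of the same admissible form so both instances lie in the language of the class. What each approach buys: the paper's is shorter, leans on the literature, and exhibits the separation on a natural graph problem that it reuses later (in the proof that $\mathrm{PT_P}=\mathrm{P}$); yours is self-contained, elementary, and makes transparent exactly which feature of $\sqcap'$-tractability (the logarithmic output bound) causes the failure.
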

		
			\begin{proof}
				Let $\mathcal{Q} \in \sqcap'\mathrm{T^0_Q}$.
				For any $\langle D, Q \rangle \in S_{\mathcal{Q}}$, $Q$ satisfies short-query property, $i.e.$, $|Q| \in O(\log{|D|})$,   
				there is a PTIME preprocessing function $\Pi$ such that $|\Pi(D)| \in O(\log{|D|})$ and a language $S'$ such that $\langle D, Q\rangle \in S$ if and only if $\langle \Pi(D), Q\rangle \in S'$, and 
				$S'$ is in $\mathrm{P}$, that is, the time of computing $Q(D')$ is $O((|D'|)^c) = O(\log^c{|D|})$, according to the definition of $\sqcap'\mathrm{T^0_Q}$. 
				Thus, $\sqcap'\mathrm{T^0_Q} \subseteq \mathrm{PTR^0_Q}$ since $O(\log^c{|D|}) \subseteq o(|D|)$.
				
				It has been proved that the boolean queries for breadth–depth search, denoted by $\mathcal{Q}_{\rm{BDS}}$, is not in $\sqcap'\mathrm{T^0_Q}$  \cite{Yang2017Tractable}. We need to prove that $\mathcal{Q}_{\rm{BDS}}$ is in $\mathrm{PTR^0_Q}$.

				Given an undirected graph $G = (V, E)$ with $n$ nodes with numbers and $m$ edges and a pair of nodes $u$ and $v$ in $V$, the query is whether $u$ visited before $v$ in the breadth-depth search of $G$. The query is processed as follows. 
				
				We first define preprocessing funciton $\Pi$ that performs breadth-depth search on $G$ \cite{Greenlaw1993BREADTH}, and return a array $M$, where the index of $M$ is the node number 
				and if $M[i]<M[j]$ then node $i$ is visited before node $j$.  
				
				After the preprocessing, the query can be answered in two following steps:
				(1) Access $M[u]$ and $M[v]$.
				(2) If $M[u]<M[v]$ then $u$ is visited before $v$ else $v$ is visited before $u$.
				
				It is obvious that the query can be answered by a RATM in $O(\log n)$ time. 
				
				Moreover, the size of $G$ encoded by adjacency list is $O(n^2)$ and the size of $M$ is $O(n\log{n})$, that is, $|M| < |G|$.
				
				Therefore, $\mathcal{Q}_{\rm{BDS}} \in \mathrm{PTR^0_Q}$.
				Consequently, $\sqcap'\mathrm{T^0_Q} \subsetneq \mathrm{PTR^0_Q}$.
				\qed
			\end{proof}
		
			Let $\mathrm{PT^0_Q}=\mathrm{PTR^0_Q} \cup \mathrm{PTE^0_Q}$.
			We extend the definition of $\mathrm{PT^0_Q}$ to the decision problems by 
			using the definition of factorization in \cite{Fan2013Making}.
			A factorization of a language $L$ is $\Upsilon = (\pi_1, \pi_2, \rho)$, where $\pi_1$, $\pi_2$, and $\rho$ are in $\mathrm{NC}$ and satisfy that  $\langle \pi_1(x), \pi_2(x) \rangle \in S_{(L, \Upsilon)}$ and $\rho(\pi_1(x), \pi_2(x)) = x$ for all $x \in L$.
			
			\begin{definition}\label{def:made-pt}
				A decision problem $L$ \emph{can be made pseudo-tractable} if there exists a factorization $\Upsilon$ of $L$ such that the language $S_{(L, \Upsilon)}$ of pairs for $(L, \Upsilon)$ is pseudo-tractable, $i.e$, $S_{(L, \Upsilon)} \in \mathrm{PT^0_Q}$. 
			\end{definition}
		
		    In the rest of the paper, we use $\mathrm{PT_P}$ to denote the set of all decision problems that can be made pseudo-tractable.
			
			The following theorem tells us that all problem in $\mathrm{P}$ can be made pseudo-tractable.
			
			\begin{theorem}
				$\mathrm{PT_P} = \mathrm{P}$.
			\end{theorem}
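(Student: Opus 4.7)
The plan is to prove both inclusions separately. The direction $\mathrm{PT_P} \subseteq \mathrm{P}$ will follow almost immediately by composing the NC-computable components of a factorization with the polynomial-time decider guaranteed by Theorem \ref{thrm:pt-p}. The reverse direction $\mathrm{P} \subseteq \mathrm{PT_P}$ is where the real work lies, and the idea will be to pick the most trivial factorization possible and then design a preprocessing that literally bakes the answer to the decision problem into the output, so that the post-preprocessing phase is a mere random-access lookup.

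For the easy direction, fix $L \in \mathrm{PT_P}$ with witnessing factorization $\Upsilon = (\pi_1, \pi_2, \rho)$ and let $S_{(L,\Upsilon)} \in \mathrm{PT^0_Q}$. Since $\mathrm{PT^0_Q} = \mathrm{PTR^0_Q} \cup \mathrm{PTE^0_Q}$ is contained in $\mathrm{PT}$, Theorem \ref{thrm:pt-p} gives $S_{(L,\Upsilon)} \in \mathrm{P}$. Because $\pi_1, \pi_2 \in \mathrm{NC} \subseteq \mathrm{P}$, on input $x$ we can compute $\pi_1(x)$ and $\pi_2(x)$ in polynomial time and then test $\langle \pi_1(x), \pi_2(x) \rangle \in S_{(L,\Upsilon)}$ in polynomial time. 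By the reconstruction identity in the definition of a factorization, this acceptance condition coincides with $x \in L$, so $L \in \mathrm{P}$.

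For the harder direction, fix $L \in \mathrm{P}$ and let $\mathcal{A}$ be a polynomial-time algorithm deciding $L$. Take the trivial factorization $\pi_1(x) = x$, $\pi_2(x) = \varepsilon$, $\rho(a,b) = a$; all three are computable in constant depth and hence in NC, and they satisfy $\rho(\pi_1(x), \pi_2(x)) = x$. Then set $S_{(L,\Upsilon)} = \{\langle x, \varepsilon\rangle : x \in L\}$. I will show $S_{(L,\Upsilon)} \in \mathrm{PTE^0_Q}$ by defining the preprocessing $\Pi(D)$ to run $\mathcal{A}$ on $D$ and output $D$ concatenated with a single bit $b \in \{0,1\}$ recording the outcome; this is a PTIME function with $|\Pi(D)| = |D| + 1 \ge |D|$. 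The post-preprocessing language $S' = \{\langle w, \varepsilon\rangle :$ the last symbol of $w$ is $1\}$ is decidable by a RATM in $O(\log |D|)$ time: use Lemma \ref{lema:dlg-len} to write $|\Pi(D)|$ in binary onto the index tape and enter the random-access state to inspect the last cell. By construction $\langle D, \varepsilon\rangle \in S_{(L,\Upsilon)}$ iff $\langle \Pi(D), \varepsilon\rangle \in S'$, so $S_{(L,\Upsilon)} \in \mathrm{PTE^0_Q} \subseteq \mathrm{PT^0_Q}$, and $L$ can be made pseudo-tractable per Definition \ref{def:made-pt}.

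I do not anticipate a serious obstacle here; the only fiddly point is checking that the empty query $\varepsilon$ and the identity/projection maps are permissible inside the query framework for $\mathrm{PTE^0_Q}$ and the factorization framework for $\mathrm{PT_P}$, but nothing in the definitions forbids this choice. The conceptual content of the argument is simply that once $L$ is in $\mathrm{P}$, the polynomial preprocessing budget is enough to precompute and cache the decision, reducing sublinear-time verification to a single random-access read.
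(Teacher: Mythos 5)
Your proof is correct under the paper's definitions, and the first inclusion ($\mathrm{PT_P} \subseteq \mathrm{P}$) matches the paper's argument: compose the $\mathrm{NC}\subseteq\mathrm{P}$ factorization maps with the PTIME preprocessing and the sublinear (hence polynomial, via Theorem~\ref{thrm:pt-p} and Lemma~\ref{lema:ratm-tm}) query evaluation. For the converse you take a genuinely different route. The paper invokes the $\mathrm{P}$-completeness of breadth--depth search under $\mathrm{NC}$ reductions: given $L\in\mathrm{P}$ it maps $x$ to a BDS instance via an $\mathrm{NC}$ reduction $f$ and reuses the factorization and the $\mathrm{PTR^0_Q}$ membership of $\mathcal{Q}_{\rm BDS}$ established earlier, so its witness lands in $\mathrm{PTR^0_Q}$ and exhibits a nontrivial data/query split; the price is reliance on the external completeness result and some glossed-over details (the composed $\rho'$ would have to recover $x$ from $(G,(u,v))=(\pi_1(f(x)),\pi_2(f(x)))$, i.e.\ essentially invert $f$, and the claimed equality $S_{(L,\Upsilon)}=S_{(L_{\rm BDS},\Upsilon_{\rm BDS})}$ is really only a membership correspondence on the image of $f$). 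You instead use the trivial factorization $\pi_1(x)=x$, $\pi_2(x)=\varepsilon$, $\rho(a,b)=a$ and a preprocessing that caches the answer bit of the PTIME decider, placing $S_{(L,\Upsilon)}$ in $\mathrm{PTE^0_Q}$; the reconstruction condition $\rho(\pi_1(x),\pi_2(x))=x$ is then verified exactly, and the only ingredients are Lemma~\ref{lema:dlg-len} and one random access, so the argument is more elementary and self-contained. What it gives up is illustrative value: it shows the factorization framework collapses to ``precompute and look up'' once the query part may be trivial, whereas the paper's route demonstrates pseudo-tractability through a problem with a genuine short-query structure. Two small points you should nail down, both at the level of rigor the paper itself uses: confirm that the empty (or a fixed one-symbol) query is admissible in the pair encoding $\langle D,Q\rangle$, and specify how the RATM locates the answer bit inside the pair encoding of $\langle\Pi(D),\varepsilon\rangle$ (prepending the bit to $D$ avoids even that bookkeeping).
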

		
				\begin{proof}
				First, we prove that $\mathrm{PT_P} \subseteq \mathrm{P}$.
				A factorization $\Upsilon$ of $L$ are three $\mathrm{NC}$ computable functions, and it is known that $\mathrm{NC} \subseteq \mathrm{P}$.
				The preprocessig function and query processing can be computed in polynomial time.
				Thus, $\mathrm{PT_P} \subseteq \mathrm{P}$.
				
				Then, we prove that $\mathrm{P} \subseteq \mathrm{PT_P}$.
				It is proved that \rm{BDS} is $\mathrm{P}$-complete under $\mathrm{NC}$-reduction \cite{Greenlaw1993BREADTH}, $i.e.$, 
				there is a $f(\cdot) \in \mathrm{NC}$ such that $x \in L$ iff $f(x) \in BDS$ for every $L \in \mathrm{P}$.
				Let $\Upsilon_{\rm{BDS}} = (\pi_1, \pi_2, \rho)$ be a factorization of \rm{BDS}, where $\pi_1(x) = G, \pi_2(x) = (u, v)$, and $\rho$ maps $\pi_1(x)$ and $\pi_2(x)$ back to $x$.
				As mentioned in the proof of Theorem 8, $S_{(L_\mathrm{BDS}, \Upsilon_\mathrm{BDS})} \in \mathrm{PTR^0_Q} \in \mathrm{PT^0_Q}$.
				
				For every $L \in P$, we can first use $f(\cdot)$ to transform $x \in L$ into a instance of $\mathrm{BDS}$, and then use factorization  $\Upsilon_\mathrm{BDS}$ to construct
				$\Upsilon = (\pi'_1, \pi'_2, \rho')$, where $\pi'_1(x) = \pi_1(f(x))=G$ and $\pi'_2(x) = \pi_2(f(x))=(u, v)$.
				Hence, we get a $\Upsilon$ for $L$ such that $S_{(L, \Upsilon)} = S_{(L_\mathrm{BDS}, \Upsilon_\mathrm{BDS})} \in \mathrm{PT^0_Q}$
				Thus $L \in \mathrm{PT_P}$ by definition \ref{def:made-pt}.
				Therefore, $\mathrm{P} \subseteq \mathrm{PT_P}$.
				\qed
			\end{proof}

	\section{Conclusion}
	
		This work aims to recognize the tractability in big data computing.
		The RATM is used as the computational model in our work, and an efficient URATM is devised.
		
		Two pure-tractable classes are proposed.
		One is the class $\mathrm{PL}$ consisting of the problems that can be solved in polylogarithmic time by a RATM.
		The another one is the class $\mathrm{ST}$ including all the problems that can be solved in sublinear time by a RATM.
		The structure of pure-tractable classes is deeply investigated.
		The polylogarithmic time hierarchy, $\mathrm{PL}^i \subsetneq \mathrm{PL}^{i+1}$, is first proved. Then, is proved that $\mathrm{PL} \subsetneq \mathrm{ST}$.
		Finally, $\mathrm{DLOGTIME}$ reduction is proved to be closed for $\mathrm{PL}$ and $\mathrm{ST}$, and the first $\mathrm{PL}$-complete problem and the first $\mathrm{ST}$-complete problem are given.
		
		Two pseudo-tractable classes, $\mathrm{PTR}$ and $\mathrm{PTE}$, are also proposed.
		$\mathrm{PTR}$ consisting of all the problems that can solved by a RATM in sublinear time after a PTIME preprocessing by reducing the size of input dataset.
		$\mathrm{PTE}$ consisting of all the problems that can solved by a RATM in sublinear time after a PTIME preprocessing by extending the size of input dataset. 
		The relations among the pseudo-tractable classes and other complexity classes are investigated.
		They are proved that $\mathrm{PT} \subseteq \mathrm{P}$ and $\sqcap'\mathrm{T^0_Q} \subsetneq \mathrm{PTR^0_Q}$.
		And for the decision problems, we proved that all decision problems in $\mathrm{P}$ can be made pseudo-tractable.
	
	\section*{Acknowledgment}
		This work was supported by the National Natural Science Foundation of China under grants 61732003, 61832003, 61972110 and U1811461.
		
	\bibliographystyle{plain}
	\bibliography{ref}
		
\end{document}